\documentclass[11pt]{article}
\pdfoutput=1
\usepackage{amsmath,amssymb,amsfonts,amsxtra, mathrsfs, makeidx,graphics,graphicx,amsthm,epsfig, youngtab,jheppub1}

\newcommand{\be}{\begin{equation}}
\newcommand{\ee}{\end{equation}}
\newcommand{\beq}{\begin{equation}}
\newcommand{\eeq}{\end{equation}}
\newcommand{\ba}{\begin{array}}
\newcommand{\ea}{\end{array}}
\newcommand{\bi}{\begin{itemize}}
\newcommand{\ei}{\end{itemize}}
\def\bea#1\eea{\allowdisplaybreaks \begin{eqnarray}#1\end{eqnarray}}
\newcommand{\ben}{\begin{enumerate}}
\newcommand{\een}{\end{enumerate}}
\newcommand{\bean}{\begin{eqnarray*}}
\newcommand{\eean}{\end{eqnarray*}}
\newcommand{\eref}[1]{(\ref{#1})}

\newcommand{\tref}[1]{Table~\ref{#1}}

\newcommand{\nn}{\nonumber}

\newcommand{\tr}{\mathop{\rm Tr}}
\newcommand{\PE}{\mathop{\rm PE}}

\newcommand{\BZ}{\mathbb{Z}}

\newcommand{\BU}{\mathbf{1}}

\newcommand{\comment}[1]{}

\newcommand{\CD}{{\cal D}}

\newcommand{\CW}{{\cal W}}
\newcommand{\CN}{{\cal N}}

\newcommand{\CH}{{\cal H}}

\newcommand{\CC}{{\cal C}}

\newcommand{\CI}{{\cal I}}
\newcommand{\diag}{\mathrm{diag}}

\newcommand{\FC}{\mathfrak{C}}

\newcommand{\eg}{{\it e.g.}}

\newcommand{\ud}{\mathrm{d}}

\newcommand{\tH}{\widetilde{H}}
\newcommand{\tCH}{\widetilde{\mathcal{H}}}
\newcommand{\ta}{\widetilde{a}}

\newtheorem{theorem}{\bf Theorem}

\newtheorem{lemma}[theorem]{\bf Lemma}

\title{The Hilbert series of $\CN=1$ $SO(N_c)$ and $Sp(N_c)$ SQCD, Painlev\'e VI and Integrable Systems}

\author[a]{Estelle Basor,}
\author[b]{Yang Chen,}
\author[c]{and Noppadol Mekareeya} 

\affiliation[a]{American Institute of Mathematics, \\
360 Portage Avenue, Palo Alto, CA 9430, USA}
\affiliation[b]{Department of Mathematics, Imperial College London, \\
180 Queen's Gates, London SW7 2BZ, UK}
\affiliation[c]{Max-Planck-Institut f\"ur Physik (Werner-Heisenberg-Institut), \\
F\"ohringer Ring 6, 80805 M\"unchen, Deutschland}
\emailAdd{ebasor@aimath.org}
\emailAdd{ ychen@imperial.ac.uk}
\emailAdd{noppadol@mpp.mpg.de}
\abstract{We present a novel approach for computing the Hilbert series of 4d $\CN=1$ supersymmetric QCD with $SO(N_c)$ and $Sp(N_c)$ gauge groups.  It is shown that such Hilbert series can be recast in terms of determinants of Hankel matrices.  With the aid of results from random matrix theory, such Hankel determinants can be evaluated both exactly and asymptotically.  Several new results on Hilbert series for general numbers of colours and flavours are thus obtained in this paper.  We show that the Hilbert series give rise to families of rational solutions, with palindromic numerators, to the Painlev\'e VI equations.  Due to the presence of such Painlev\'e equations, there exist integrable Hamiltonian systems that describe the moduli spaces of $SO(N_c)$ and $Sp(N_c)$ SQCD.   To each system, we explicitly state the corresponding Hamiltonian and family of elliptic curves.  It turns out that such elliptic curves take the same form as the Seiberg--Witten curves for 4d $\CN=2$ $SU(2)$ gauge theory with 4 flavours.}

\begin{document}
\maketitle

\section{Introduction and summary}
Supersymmetric gauge theory has become one of the prime subjects of interest in quantum field theory and string theory. The presence of supersymmetry in such theories provides a better control of dynamics of the theory than non-supersymmetric ones.   Hence many aspects of the theories can be studied analytically and exactly.  Moreover, such theories exhibit a wide range of interesting phenomena, such as confinement, chiral symmetry breaking and dualities.

One of the simplest classes of supersymmetric gauge theories is Supersymmetric Quantum Chromodynamics (SQCD).  This class of theories has received a lot of attention, partly because of the richness of its quantum dynamics and partly because of its applications for dynamical supersymmetry breaking.  
In this paper, we focus on SQCD with $\CN=1$ supersymmetry in four dimensions and the gauge group being $SO(N_c)$ or $Sp(N_c)$.  The matter content consists of $N_f$ `flavours' of chiral superfields, called {\it quarks}, in the fundamental representation of the gauge group.  The global symmetries for the cases of $SO(N_c)$ gauge group and $Sp(N_c)$ gauge group are $U(N_f)$ and $U(2N_f)$ respectively.  Hence, in each case, the quarks transform in the bi-fundamental representation of $SO(N_c) \times U(N_f)$ or $Sp(N_c) \times U(2N_f)$ respectively.  The superpotential is taken to be zero.

Each of such theories has a continuous manifold of inequivalent exact ground states, known as a {\it moduli space of vacua} or simply a {\it moduli space}.  We refer the readers to \cite{Intriligator:1995id, hep-th/9505006} for detailed descriptions on the moduli spaces of $SO(N_c)$ and $Sp(N_c)$ SQCD.   In classical theory, the moduli space is the space of the solutions of the vacuum equations, namely the $F$ and $D$ term equations -- this is referred to as the {\it classical moduli space}.  This space can be viewed as an algebraic variety, generated by gauge invariant combinations of quarks (known as {\it mesons} and {\it baryons}).  These generators of the moduli space may be subject to certain relations among themselves.  

One of the aims of this paper is to compute a partition function, known as the {\it Hilbert series}, to count the number of gauge invariant quantities on the moduli space.  In fact, Hilbert series have been calculated and have been used to characterise moduli spaces of a wide range of supersymmetric gauge theories (see, \eg, \cite{Benvenuti:2006qr, Feng:2007ur, Benvenuti:2010pq, Hanany:2010qu, Hanany:2011db, Hanany:2008qc, Gray:2008yu, Hanany:2008kn, Hanany:2008sb, Chen:2011wn, JJK}). For each theory, the Hilbert series contains information about the generators and relations of the moduli space, viewed as an algebraic variety \cite{Benvenuti:2006qr, Feng:2007ur}. Moreover, given a Hilbert series, one can compute not only the dimension of the moduli space from the order of pole (see, \eg, \cite{Benvenuti:2006qr, Feng:2007ur} and \cite{DK}), but one can also use it as a test whether the moduli space is a Calabi-Yau variety \cite{Gray:2008yu,Hanany:2008kn}.

The Hilbert series of $SO(N_c)$ and $Sp(N_c)$ SQCD have been computed in \cite{Hanany:2008kn}.  Such Hilbert series were computed from the Molien--Weyl formula, which involve contour integrals over the torus $\mathbb{T}^r$, where $r$ is the rank of the gauge group.  In \cite{Hanany:2008kn}, such integrals were computed using the residue theorem for several simple cases.  From which, a number of general formulae were conjectured and checked using many non-trivial tests.  

In this paper, we introduce a new method in computing the Hilbert series of $SO(N_c)$ and $Sp(N_c)$ SQCD.  The idea of this method is similar to those of \cite{Chen:2011wn} and \cite{JJK}\footnote{We refer the readers to \cite{Balasubramanian:2004fz, Jokela:2005ha, Jokela:2008zh, Jokela:2009fd, Jokela:2010cc} for the use of Toeplitz determinants in the literature on decaying D-branes.}, where the Hilbert series of $U(N_c)$ and $SU(N_c)$ SQCD were recast in terms of determinants of Toeplitz matrices and these determinants were evaluated exactly and asymptotically using certain results from random matrix theory \cite{BottSil, BottcherWidom, BasorWidom, Geronimo:1979iy, ref:BO, ref:szego, ref:FisherHartwig1, ref:FisherHartwig2, ref:BottcherSilbermann}. On the other hand, in this paper we show that, for $SO(N_c)$ or $Sp(N_c)$ SQCD, the contour integrals in the Molien--Weyl formula can be rewritten in terms of determinant of a Hankel matrix.  For these theories, we find that it is possible to apply various versions of exact determinant formulae (EXDT), which are due to a result of \cite{BE2008} combined with a computation done in \cite{PJW1}, to calculate the Hankel determinants in the question both exactly and asymptotically.  With this method, the Hilbert series are computed for a class of values of $N_f$ and $N_c$ rather than a specific value of $(N_f,N_c)$ as in \cite{Hanany:2008kn}.  Such results for SQCD with the gauge groups $SO(2n+1)$, $SO(2n)$ and $Sp(n)$ are collected in Sections \ref{sec:Bn}, \ref{sec:Dn}, \ref{sec:Cn} (see also Appendix \ref{app:refined} for the refined Hilbert series).  Using this method of computations, we prove certain results in \cite{Hanany:2008kn} and obtain also several new results.

With the aid of \cite{DaiZhang}, we show that the Hankel determinants corresponding to $SO(N_c)$ and $Sp(N_c)$ SQCD with $N_f$ flavours give rise to infinite families of the Painlev\'e VI equation, with the parameters summarised in \tref{tab:parameters}.  These solutions share a common feature: When written in terms of an appropriate variable, they are rational functions with palindromic numerators.  We discuss such solutions and their properties in Section \ref{PainleveVI}.

In Section \ref{sec:integrability}, we make use of the Painlev\'e VI equations to explore further properties of the moduli spaces.  To each of such Painlev\'e equations, there exists a corresponding Hamiltonian system whose Hamiltonian describes the moduli space of $SO(N_c)$ and $Sp(N_c)$ SQCD with $N_f$ flavour.  The Hamiltonian is explicitly stated in \eref{Hamiltonian}.   

A result of \cite{Noumi} implies that there is a corresponding family of elliptic curves to such a Hamiltonian system.  These curves take the same form as the Seiberg--Witten curves of 4d $\CN=2$ $SU(2)$ gauge theory with 4 flavours \cite{Seiberg:1994aj}.  We discuss such elliptic curves and their parameters in Section \ref{sec:curves}.    Although these curves are naturally associated with the Painlev\'e VI equations, their physical origin and interpretation remain unclear.   Nevertheless, the presence of the Painlev\'e VI equation also implies the existence of a Lax pair which provides the integrability structure to the aforementioned Hamiltonian system.

Note however that the classical moduli space may be modified by quantum corrections.  
In Section \ref{sec:validity}, we use the results from \cite{Intriligator:1995id, hep-th/9505006} to briefly comment on the validity of our results for a quantum moduli space.  In the case that a quantum moduli space of supersymmetric vacua still exists, far away from singularities, the generators and their relations are unaffected by quantum effects and there are no extra massless degrees of freedom.  Hence the Hilbert series computed in the earlier sections still capture the structure of gauge invariant operators in such a region of the quantum moduli space.   We therefore conjecture that such a region of the quantum moduli space is still described by the integrable Hamiltonian system \eref{Hamiltonian}.\footnote{The correspondence between 4d $\CN=1$ gauge theories and  integrable systems has also been studied for a large class of models which admit brane tiling descriptions (also known as {\it dimer models}).  We refer the readers to \cite{GK, Franco:2011sz, Eager:2011dp}.}

\section{$SO(2n+1)$ SQCD with $N_f$ flavours} \label{sec:Bn}
In this section, we consider the Hilbert series of $SO(2n+1)$ SQCD with $N_f$ flavours.  We first write it in the form of the multi-contour integrals over the $n$-torus $\mathbb{T}^n$.  This form is then recast in terms of determinant of a Hankel matrix.   We then apply a version of the EXDT formula to compute such a determinant both exactly and asymptotically.

\subsection{The computations of Hilbert series}
The Hilbert series of $SO(N_c)$ SQCD with $N_f$ flavours can be computed in two steps as follows (see \cite{Gray:2008yu,Hanany:2008kn} for further details).

\paragraph{Step 1.}  We first consider the space of symmetric functions of the quarks $Q^i_a$ with $i =1, \ldots, N_f$ and $a = 1, \ldots, N_c$.  The Hilbert series of this space can be constructed using the {\it plethystic exponential}, which is a generator for symmetrisation \cite{Benvenuti:2006qr, Feng:2007ur}.  We define the plethystic exponential of a multi-variable function $g(t_1,...,t_n)$ that vanishes at the origin, $g(0,...,0) = 0$, to be
\bea
\PE [ g(t_1, \ldots, t_n) ] := \exp \left( \sum_{r=1}^\infty \frac{1}{r} g(t_1^r, \ldots, t_n^r) \right)~.
\eea
Note that the quarks transform in the vector representation $[1,0, \ldots, 0]$ of the $SO(N_c)$ gauge group and in the fundamental representation of $[1,0, \ldots, 0]$ of the $SU(N_f)$ flavour symmetry.  For reference, we write down the characters of these representations as follows:
\bea
[1,0, \ldots, 0]^{SO(2n+1)}_z &=& 1 + \sum_{a=1}^n \left( z_a + \frac{1}{z_a} \right)~, \qquad
\left[1,0, \ldots, 0 \right]^{SO(2n)}_z = \sum_{a=1}^n \left( z_a + \frac{1}{z_a} \right)~, \nn \\
\left[1,0, \ldots, 0 \right]^{SU(N_f)}_x 
&=& x_1 + \sum_{k=1}^{N_f -2} \frac{x_{k+1}}{x_k} + \frac{1}{x_{N_f-1}}~,
\eea
where, here and from now on, we use $z_a$ (with $a=1, \ldots, n$) to denote the $B_n = SO(2n+1)$ or $D_n = SO(2n)$ gauge fugacities and use $x_i$ (with $i =1, \ldots, N_f-1$) to denote the $A_{N_f-1}=SU(N_f)$ flavour fugacities. 

In this section, we focus on the $B_n = SO(2n+1)$ gauge group.  The Hilbert series for the space of symmetric functions of the quarks is then given by
\bea
\begin{array}{lll}
& \PE \big[ [1,0, \ldots,0]^{B_n}_z [1,0, \ldots,0]^{A_{N_f-1}}_x t \big]
  &=   \frac{1}{\left(1- t x_1 \right) \prod_{k=1}^{N_f-2} \left(1- t  \frac{x_{k+1}}{x_{k}} \right) \left(1- t x_{N_f-1} \right)}  \\
&  & \quad \times \prod_{a=1}^{n}  \frac{1}{\left(1- t z_a x_1 \right) \prod_{k=1}^{N_f-2} \left(1- t z_a \frac{x_{k+1}}{x_{k}} \right) \left(1- t z_a x_{N_f-1} \right)}  \\
& & \quad \times  \prod_{a=1}^{n}  \frac{1}{\left(1- t z_a^{-1} x_1 \right) \prod_{k=1}^{N_f-2} \left(1- t z_a^{-1} \frac{x_{k+1}}{x_{k}} \right) \left(1- t z_a^{-1} x_{N_f-1} \right)}  \\
&&  =   \PE \big[ [1,0, \ldots,0]_x t \big] \nn \\
&& \quad \times \prod_{a=1}^{n}  \frac{1}{\left(1- t z_a x_1 \right) \prod_{k=1}^{N_f-2} \left(1- t z_a \frac{x_{k+1}}{x_{k}} \right) \left(1- t z_a x_{N_f-1} \right)}  \\
&& \quad \times  \prod_{a=1}^{n}  \frac{1}{\left(1- t z_a^{-1} x_1 \right) \prod_{k=1}^{N_f-2} \left(1- t z_a^{-1} \frac{x_{k+1}}{x_{k}} \right) \left(1- t z_a^{-1} x_{N_f-1} \right)}~.
\end{array}
\eea
We may set the fugacities $x_1, \ldots, x_{N_f-1}$ to unity and obtain
\bea
 \frac{1}{(1-t)^{N_f} \prod_{a=1}^{n} (1-t z_a)^{N_f} (1-\frac{t}{z_a})^{N_f} }~.
\eea


\paragraph{Step 2.} Since the moduli space is parametrised by gauge invariant quantities, we need to project representations associated with symmetric functions in $Q$ discussed in Step 1 onto the trivial subrepresentation, which consists of the quantities invariant under the action of the gauge group.  Using knowledge from representation theory (known as the {\it Molien-Weyl formula} -- see \eg~\cite{DK, Djokovich}), this can be done by integrating over the whole gauge group.  

The Haar measures of the group $B_n = SO(2n+1)$ can be written in terms of contour integrations as
\bea
\int \ud \mu_{B_n} &=&  \frac{1}{(2 \pi i)^{n} n!} \oint \limits_{|z_1| =1}  \frac{\ud z_1}{z_1} \cdots \oint \limits_{|z_{n}| =1}  \frac{\ud z_{n}}{z_{n}} \left | \Delta_{n} \left(z + \frac{1}{z} \right) \right|^2 \prod_{a=1}^n \left[ 1- \frac{1}{2} \left(z_a + \frac{1}{z_a} \right)   \right]~. \nn \\
\eea

\paragraph{The refined Hilbert series.} The Hilbert series for $SO(N_c)$ SQCD with $N_f$ flavours is given by
\bea 
g_{N_f, B_n} (t ,x) &=& \int \ud \mu_{B_n} (z_1, \ldots, z_n)   \PE \left[ [1,0, \ldots,0]_x \left( 1 + \sum_{a=1}^n \left( z_a + \frac{1}{z_a} \right) \right) t \right]  \nn \\
&=&  \PE \left[ [1,0, \ldots,0]_x t \right] \CI_{N_f, B_n}(t,x)~, \label{HSBnPE}
\eea
where here and henceforth we write $x$ as collective notation for $x_1, \ldots, x_{n}$ and
\bea
\CI_{N_f, B_n}(t,x) &=& \int \ud \mu_{B_n} (z_1, \ldots, z_n) \PE \left[ [1,0, \ldots,0]_x \sum_{a=1}^n \left( z_a + \frac{1}{z_a} \right) t \right] \nn \\
&=& \frac{1}{(2 \pi i)^{n} n!} \oint \limits_{|z_1| =1}  \frac{\ud z_1}{z_1} \cdots \oint \limits_{|z_{n}| =1}  \frac{\ud z_{n}}{z_{n}} \left | \Delta_{n} \left(z + \frac{1}{z} \right) \right|^2 \prod_{a=1}^n \left[ 1- \frac{1}{2} \left(z_a + \frac{1}{z_a} \right)   \right] \nn \\
&& \times \PE \left[ [1,0, \ldots,0]_x \sum_{a=1}^n \left( z_a + \frac{1}{z_a} \right) t \right]~. \label{IBnref}
\eea

\paragraph{The unrefined Hilbert series.} Setting $x_1, \ldots, x_{n}$ to unity, we obtain the so-called {\it unrefined Hilbert series} of $SO(2n+1)$ SQCD with $N_f$ flavours:
\bea
g_{N_f, B_n} (t)&=&  \int \ud \mu_{B_n} (z_1, \ldots, z_n)  \frac{1}{(1-t)^{N_f} \prod_{a=1}^{n} (1-t z_a)^{N_f} (1-\frac{t}{z_a})^{N_f} } \nn \\
&=& \frac{1}{(1-t)^{N_f}}  \CI_{N_f, B_n}(t)~, \label{gNfBnintun}
\eea
where
\bea
 \CI_{N_f, B_n}(t) 
&=& \frac{1}{(2 \pi i)^{n} n!} \oint \limits_{|z_1| =1}  \frac{\ud z_1}{z_1} \cdots \oint \limits_{|z_{n}| =1}  \frac{\ud z_{n}}{z_{n}} \left | \Delta_{n} \left(z + \frac{1}{z} \right) \right|^2 \prod_{a=1}^n \frac{\left[ 1- \frac{1}{2} \left(z_a + \frac{1}{z_a} \right)   \right]}{(1-t z_a)^{N_f} (1-t/z_a)^{N_f}}~. \nn \\
\label{INfBnintun}
\eea

We focus on computations of the unrefined Hilbert series in the main text and postpone computations of the refined Hilbert series to Appendix \ref{app:refined}.

\subsection{The Hankel determinant}
In this section, we rewrite the integral form \eref{IBnref} and \eref{INfBnintun} of the Hilbert series in another way.  As we shall see below, the new form of the integrals allows us to recast \eref{IBnref} and \eref{INfBnintun} into a determinant known as a {\it Hankel determinant}.  It turns out that we can evaluate such a determinant in an exact way.  For simplicity, we shall postpone the discussion on the refined Hilbert series to Appendix \ref{AppBnref} and focus on the unrefined Hilbert series in this section. 

The computation of a Hankel determinant can be done using a generalization of an exact formula for Toeplitz determinants often referred to as the Borodin--Okounkov formula. This formula was actually first discovered by Geronimo and Case  \cite{Geronimo:1979iy}, later rediscovered by Borodin and Okounkov in relation to questions related to random matrix theory and then generalized to other classes of operators. The generalization found in \cite{BE2008} is the one that is useful for our purposes.  More will be said about this formula later in this section.
We now first rewrite the term \eref{INfBnintun} as a multiple integral over $[-1,1]$ and $[0,1]$.
 
\paragraph{Integrals over the intervals $[-1,1]^n$.} The integral $\CI_{N_f, B_n}(t)$ defined by \eref{INfBnintun} can be rewritten as
{\small
\bea
\CI_{N_f, B_n}(t) 
&=&  \frac{1}{(2 \pi i)^{n} n!} \oint \limits_{|z_1| =1}  \frac{\ud z_1}{z_1} \cdots \oint \limits_{|z_{n}| =1}  \frac{\ud z_{n}}{z_{n}}  \frac{\left | \Delta_{n} \left(z + \frac{1}{z} \right) \right|^2 \prod_{a=1}^n \left[ 1- \frac{1}{2} \left(z_a + \frac{1}{z_a} \right)   \right]}{\left[  \prod_{a =1}^{n} \left \{ 1 +t^2 - t \left(z_a + \frac{1}{z_a} \right) \right \} \right]^{N_f}} \nn \\
&=& \frac{2^{n^2-n}}{(2 \pi)^n n!} \int \limits_{-\pi}^\pi \ud \theta_1 \cdots \int \limits_{-\pi}^\pi \ud \theta_n \frac{ \prod_{1 \leq a< b \leq n}(\cos \theta_a - \cos \theta_b)^2 \prod_{a=1}^n (1- \cos \theta_a)}{ \prod_{a=1}^n (1- 2 t \cos \theta_a +t^2)^{N_f} }~. \nn \\
&=& \frac{2^{n^2}}{(2 \pi)^n n!} \int \limits_{0}^\pi \ud \theta_1 \cdots \int \limits_{0}^\pi \ud \theta_n \frac{ \prod_{1 \leq a < b \leq n}(\cos \theta_a - \cos \theta_b)^2 \prod_{a=1}^n (1- \cos \theta_a)}{ \prod_{a=1}^n (1- 2 t \cos \theta_a +t^2)^{N_f} }~. \nn \\
&=&\frac{2^{n^2}}{(2 \pi)^n n!} \int \limits_{-1}^1 \ud y_1 \cdots \int \limits_{-1}^1 \ud y_n  \prod_{1 \leq a<b \leq n} (y_a - y_b)^2 \prod_{a=1}^n  \frac{(1- y_a)^{1/2}(1+y_a)^{-1/2}}{ (1- 2 t y_a +t^2)^{N_f} } \label{intBnhankel}~, \qquad \qquad
\eea}
where we have written $z_a = e^{i \theta_a}$ and taken $y_a = \cos \theta_a$.

\paragraph{Integrals over the intervals $[0,1]^n$.} We can further rewrite the Hilbert series in terms of the Hankel determinants as follows.  Let us change the variable
\bea
y_a = 2 \zeta_a-1~. 
\eea
Therefore, we have
{\small
\bea
g_{N_f, B_n} (t) 
&=& \frac{1}{(1-t)^{N_f}}  \CI_{N_f, B_n}(t) \nn \\
&=& \frac{2^{2 n^2}}{(2 \pi)^n n!} \frac{1}{(1-t)^{N_f} (-4t)^{n N_f}} \int \limits_{0}^1 \ud \zeta_1 \cdots \int \limits_{0}^1 \ud \zeta_n \prod_{1 \leq a < b \leq n} (\zeta_a - \zeta_b)^2 \prod_{a=1}^n  \frac{\zeta_a^{-1/2} (1- \zeta_a)^{1/2}}{ \left(\zeta_a- \frac{(1+t)^2}{4t} \right)^{N_f} } \nn \\
&=& C_{N_f,B_n}(t) D_{N_f,B_n} (T)~,    \label{HShankelBn}
\eea}
where the factor $C_{N_f,B_n}(t)$ is given by
\bea
C_{N_f,B_n}(t) :=\frac{2^{2 n^2}}{(2 \pi)^n} \frac{1}{(1-t)^{N_f} (-4 t)^{n N_f}}~,
\eea
and $D_{N_f,B_n} (T)$ and the variable $T$ are given by
\bea
D_{N_f,B_n} (T) &:=& \frac{1}{n!} \int \limits_{0}^1 \ud \zeta_1 \cdots \int \limits_{0}^1 \ud \zeta_n \prod_{1 \leq a < b \leq n} (\zeta_a - \zeta_b)^2 \prod_{a=1}^n w(\zeta_a;T)~, \label{Hankelsoodd} \\
w(\zeta_a;T) &:=&   \zeta_a^{-1/2} (1- \zeta_a)^{1/2}(\zeta_a - T)^{-N_f}~,  \\
T &:=& \frac{(1+t)^2}{4t}~.
\eea
Using Gram's formula (see \eg, \cite{Mehta} and Appendix A of \cite{Chen:2011wn}), we rewrite $D_{N_f,B_n} (T)$ as determinant of the $n \times n$ matrix:
\bea
D_{N_f,B_n} (T) = \det \left( \int_0^1 \ud \zeta~w(\zeta;T) \zeta^{i+j} \right)_{i,j =0}^{n-1}~.
\eea

\paragraph{The Hankel determinant.}  The determinant $D_{N_f,B_n} (T)$ is known as the {\it Hankel determinant} with the perturbed Jacobi weight
\bea
w(\zeta;T) &=&   \zeta^{\alpha} (1- \zeta)^{\beta}(\zeta - T)^{\gamma}~,
\eea
with the parameters
\bea
\alpha = -1/2, \qquad \beta= 1/2, \qquad \gamma = - N_f~. \label{abcBn}
\eea

The weight $w(\zeta;T)$ is a perturbation $(\zeta - T)^{\gamma}$ on the Jacobi weight $\zeta^{\alpha} (1- \zeta)^{\beta}$.  Note that perturbed Jacobi weights have been extensively studied in \cite{PJW1,PJW2,PJW3,PJW4,PJW5,PJW6,PJW7,PJW8,PJW9} using the ladder operator approach to orthogonal polynomials.

\paragraph{Palindromic numerator of the Hilbert series.}  Observe that $T$ is invariant under $t \mapsto 1/t$.  Hence it is clear that the determinant $D_{N_f,B_n} (T)$ is also invariant under this transformation.  Since 
\bea
C_{N_f, B_n} (1/t) = (-1)^{N_f} t^{(2n+1) N_f} C_{N_f B_n} (t)~,
\eea
it follows that
\bea
g_{N_f, B_n} (1/t) =  (-t)^{(2n+1) N_f} g_{N_f, B_n} (t)=  (-t)^{N_c N_f} g_{N_f, B_n} (t)~, \label{invertBn}
\eea
and so the numerator of $g_{N_f, B_n}(t)$ is palindromic.  Note that this is the same argument used in \cite{Gray:2008yu} to prove the palindromic property of the numerator of $SU(N_c)$ SQCD.  A physical implication is that the moduli space is a Calabi-Yau variety.  

We shall make use of the transformation $t \mapsto 1/t$ to deduce some symmetries of certain Painlev\'e solutions in Section \ref{sec:solutionsP6}.

\subsection{The EXDT II formula}
Having identified the Hilbert series with the Hankel determinant, we can now apply techniques from random matrix theory to compute the Hankel determinant both exactly and asymptotically in a similar fashion to \cite{Chen:2011wn}.  As mentioned earlier, this technique involves a generalisation of the Geronimo--Case--Borodin--Okounkov (GCBO), which was used extensively in \cite{Chen:2011wn} for computing Toeplitz determinants.  Such a generalisation to the Hankel determinants with Jacobi weights and with the parameters $\alpha, \beta = \pm 1/2$  follows from a result  due to Basor--Ehrhardt \cite{BE2008} combined with a computation done in \cite{PJW1}.  Because of the choice of parameters there are four cases of the formula and we will denote these as EXDT I, EXDT II, etc. or EXDT when the choice of parameters is not specified. (The choice of I, II, etc. follows the notation in \cite{BE2008}.)

\subsubsection{Computation of the Hankel determinant}
To compute the integral \eref{intBnhankel} we observe that the factor $(1- y_a)^{1/2}(1+y_a)^{-1/2}$ in \eref{intBnhankel} tells us that the Hankel determinant \eref{intBnhankel} corresponds to Case II of Propositions 3.1 and 3.3 of \cite{BE2008} (see also Page 16 of \cite{PJW2}).  Subsequently, we apply Proposition 4.1 of \cite{BE2008} (EXDT II formula) to compute exact expressions of Hilbert series. 

In order to apply the formula we must define the several quantities.

\paragraph{The symbol and its factorisation.} The symbol for our problem is
\bea
a(z) :=  (1-t z)^{-N_f} (1-t/z)^{-N_f}~.  \label{symbBn}
\eea
Put $a(z) = a_+(z) \ta_+(z)$ with
\bea
a_+(z) =  (1-tz)^{-N_f}~, \qquad \ta_+(z) = (1-t/z)^{-N_f}~.
\eea
Define the function $c(z)$ as
\bea
c(z) = a_+^{-1} (z) \ta_+ (z) = \left( \frac{1- tz}{1-t/z} \right)^{N_f}~.
\eea
\paragraph{Fourier coefficients and related matrices.} The Fourier coefficients $c_k$ (with $k \in \BZ$) of a function $c(z)$ are defined by
\bea
c_k := \frac{1}{2 \pi i} \oint_{|z|=1} \frac{\ud z}{z} z^{-k} c(z) = \frac{1}{2 \pi i} \oint_{|z|=1} \frac{\ud z}{z} z^{-k}  \left( \frac{1- tz}{1-t/z} \right)^{N_f}~.
\eea
Note that the coefficients $c_k$ have been computed explicitly in (2.53) of \cite{Chen:2011wn}:
\bea
c_k &=&  \sum_{m=0}^{N_f-k} {N_f \choose k+m}  {N_f +m-1 \choose m} (-1)^{m+k} t^{2m+k} \nn \\
&=& (-t)^k {N_f \choose k} {}_2F_1 (k - N_f, N_f; k+1; t^2)~. \label{ck2F1}
\eea

\paragraph{The matrices $K^B$ and $K^B_n$.} From Case II on Page 13 of \cite{BE2008}, define an infinite matrix $K^B$ to be the such that the $(i, j)$-entry (with $i, j =0,1,2,\ldots$) is given by
\bea
K^B(i,j) = - c_{i+j +1}~, \label{KijdefBn}
\eea
where the superscript $B$ indicates the gauge group $B_n = SO(2n+1)$.

Let us define the projection matrix $Q_n$ to be an infinite matrix such that
\bea
Q_n = \diag(\underbrace{0, 0, \ldots,0}_{n~\text{zeros}}, 1, 1, \ldots)~, \label{Qndef}
\eea
and define the matrix $K^B_n$ to be
\bea
K^B_n = Q_n K^B Q_n~.
\eea
It follows that
\bea  \label{zeroKn} 
K^B_n(i, j) =\left\{ 
  \begin{array}{l l}
0 &\quad \text{for}~ 0 \leq i, j \leq n-1~\text{and}~ i+j \geq N_f\\
-c_{i+j+1} &\quad \text{otherwise}~.
\end{array} \right. 
\eea

\paragraph{EXDT II formula.} We put this all together now to compute  \eref{intBnhankel}  from the EXDT II formula (for more details see Proposition 4.1 of \cite{BE2008}):
\bea
\CI_{N_f, B_n}(t) = G(a)^n \widehat{F}_{II} (a) \det (\BU + K^B_n) ~,
\eea
where the function $G(a)$ is defined by
\bea
G(a) := \exp \left(\frac{1}{2 \pi} \oint_{|z|=1} \frac{\ud z}{z} \log a(z) \right) = 1~,
\eea
and the function $\widehat{F}_{II} (a)$ is given by (see Proposition 3.3 of \cite{BE2008}):
\bea
\widehat{F}_{II} (a) &=& \exp \left(- \sum_{n=0}^\infty [\log a]_{2n+1} +\frac{1}{2} \sum_{n=1}^\infty n [\log a]_n^2 \right) \nn \\
&=&  \exp \left(- N_f  \sum_{n=0}^\infty \frac{t^{2n+1}}{2n+1} +  \frac{1}{2}N_f^2 \sum_{n=1}^\infty n \times \frac{t^{2n}}{n^2} \right) \nn \\
&=& \exp \left( \frac{1}{2}N_f \log \left(\frac{1-t}{1+t} \right) - \frac{1}{2} N_f^2 \log(1-t^2) \right) \nn \\
&=& (1-t)^{N_f} (1-t^2)^{-N_f(N_f +1)/2}~.
\eea

\paragraph{The Hilbert series.} The Hilbert series is then given by
\bea
g_{N_f, B_n} (t) &=& (1- t)^{-N_f}\CI_{N_f, B_n}(t) \nn \\
&=&  (1- t)^{-N_f} G(a)^n \widehat{F}_{II} (a) \det (\BU + K^B_n) \nn \\
&=&  \frac{\det (\BU + K^B_n)}{(1-t^2)^{N_f(N_f +1)/2}}~. \label{unrefBOBEHS}
\eea 

\subsubsection{Some explicit examples}
In this subsection, we derive from \eref{unrefBOBEHS} some explicit expressions for the Hilbert series.

\paragraph{The case of $N_f < 2n+1$.}  In this case $K^B_n (i,j ) =0$ for all $i, j$.  Therefore, the Hilbert series is
\bea
g_{N_f < 2n+1} (t) = \frac{1}{(1-t^2)^{N_f(N_f+1)/2}}~. \label{FGBn}
\eea

\paragraph{The case of $N_f = 2n+1$.}  In this case 
\bea
K^B_n (i,j ) = \left\{ 
  \begin{array}{l l}
    t^{N_f} & \quad \text{if $i = j =n$}\\
    0 & \quad \text{otherwise}~.
  \end{array} \right. 
\eea
The Hilbert series is thus
\bea
g_{N_f = 2n+1, B_n} (t) = \frac{1+t^{N_f}}{(1-t^2)^{N_f(N_f+1)/2}}~. \label{CIBn}
\eea

\paragraph{The case of $N_f = 2n+2$.}  The non-trivial block of the matrix $K^B_n$ is given by
\bea
\begin{pmatrix}
 2 (1+n) t^{1+2 n} \left(1-t^2\right) \qquad& -t^{2+2 n} \\
 -t^{2+2 n} & 0
\end{pmatrix}~.
\eea
Therefore the Hilbert series is
\bea
g_{N_f = 2n+2, B_n} (t) &=& \frac{1+2 (1+n) t^{1+2 n}-2 (1+n) t^{3+2 n}-t^{4+4 n}}{ (1-t^2)^{N_f(N_f+1)/2} }~. \label{2np2Bn}
\eea

\paragraph{The case of $N_f = 2n+3$.}  The non-trivial block of the matrix $K^B_n$ is given by
{\small
\bea
\left(
\begin{array}{ccc}
 (3+2 n) t^{1+2 n} \left(1-t^2\right) \left[1-2 t^2+n \left(1-t^2\right)\right] \quad & (3+2 n) t^{2+2 n} \left(-1+t^2\right) \quad & t^{3+2 n} \\
 (3+2 n) t^{2+2 n} \left(-1+t^2\right) & t^{3+2 n} & 0 \\
 t^{3+2 n} & 0 & 0
\end{array}
\right)~. \nn \\
\eea}
Therefore the Hilbert series is
\bea
g_{N_f = 2n+3, B_n} (t) &=& \frac{1}{(1-t^2)^{N_f(N_f+1)/2}} \Big[ 1+(1+n) (3+2 n) t^{1+2 n}-4 (1+n) (2+n) t^{3+2 n} \nn \\
&& +(2 + n) (3 + 2 n) t^{5+2 n}-(2 + n) (3 + 2 n) t^{4+4 n}+4 (1+n) (2+n)t^{6+4 n} \nn \\
&& -(1+n) (3+2 n) t^{8+4 n}-t^{9+6 n} \Big]~. \label{2np3Bn}
\eea

\subsubsection*{Comments on the results}  
Let us state some comments on the above results.

\ben
\item The results for larger $\Delta = N_f -(2n+1)$ can also been obtained in a straightforward way. 
However, since such results are too long to be reported here, we only show the explicit results up to only $\Delta=2$.  The asymptotic formula for $n, N_f \rightarrow \infty$, with $\Delta$ held fixed and being finite, is derived in the next subsection.

\item Equations \eref{FGBn} and \eref{CIBn} were obtained in \cite{Hanany:2008kn} based on physical arguments that when $N_f < 2n+1$ the moduli space is {\it freely generated} by the mesons, and when $N_f = 2n+1$ the moduli space is a {\it complete intersection}, whose generators are mesons and a baryon subject to precisely one relation.  In \cite{Hanany:2008kn}, such equations were also confirmed by a few examples which were directly computable from \eref{INfBnintun} using the residue theorem.  In this paper, we not only prove such equations using the EXDT II formula, but new results, such as \eref{2np2Bn} and \eref{2np3Bn}, are also computed.  The latter are very difficult to be obtained by performing direct integrations or even by re-summing the character expansion (2.29) of \cite{Hanany:2008kn}.  (Such a chacteracter expansion is also re-stated in \eref{genchaexpBn}.)

\item It follows from Case II of Proposition 3.1 and Proposition 4.1 of \cite{BE2008} that the Hilbert series can be written in terms of determinant of the Toeplitz matrix minus the Hankel matrix with the symbol $a(z)$ given by \eref{symbBn}.  It is then clear that the Hilbert series is a rational function in $t$.  Furthermore, by considering the transformation property of $a(z)$ under $t \mapsto 1/t$, it is immediate that the numerator of the Hilbert series is palindromic.
\een

\subsubsection{Asymptotics of unrefined Hilbert series}
Having been computing several examples using \eref{unrefBOBEHS}, we now examine leading behaviour of the numerator $\det (\BU + K^B_n)$ of the Hilbert series as $n \rightarrow \infty$. As we shall see below, this leads to asymptotic formulae for unrefined Hilbert series in various limits.  For convenience, let us define
\bea
\Delta := N_f - N_c= N_f - (2n+1)~.
\eea
Note that for $\Delta \leq 0$, the exact result are given by \eref{FGBn} and \eref{CIBn}.  In this subsection, we shall henceforth assume that $\Delta > 0$. 

\subsubsection*{Leading behaviour of numerators of unrefined Hilbert series}

Consider
\bea \label{detformula}
\det (\BU + K^B_n) &=& \sum_{r=0}^\infty \frac{1}{r!} \left[ \sum_{s=1}^\infty \frac{1}{s} (-1)^{s+1} \tr ({K^B_n}^s) \right]^r \nn \\
&=&  1+ \tr K^B_n + \frac{1}{2} \left[ (\tr K^B_n)^2 - \tr ({K^B_n}^2) \right] - \frac{1}{2} (\tr K^B_n)\tr ({K^B_n}^2) + \ldots~. \nn \\
\eea
Using \eref{ck2F1} and \eref{zeroKn}, we see that the smallest order term in $1+ \tr K^B_n$ is $O(t^{2n+1+2\Delta})$.  For the terms $\left[ (\tr K^B_n)^2 - \tr ({K^B_n}^2) \right]$, it can be checked, rather delicately, that the highest contribution is $O(t^{4n+4})$.  Note that the terms with higher traces are smaller.  

In order that there are no overlaps between $1+ \tr K^B_n$ and $\left[ (\tr K^B_n)^2 - \tr ({K^B_n}^2) \right]$, let us {\it assume} that $\Delta$ is $O(1)$ as $n \rightarrow \infty$.  Then, the coefficients of $t^{2n+1+2k}$, for all $k \leq \Delta$, can be extracted from $1+\tr K^B_n$ and the subleading terms can be neglected.  Another way to see this is the following. The determinant in question is really a finite determinant of size $\Delta +1$ by $\Delta +1$ with $n$ sufficiently large. Consider all the products the determinant computation. If not all diagonal elements are in the product there must be at least two off-diagonal terms, and thus something of order greater that $t^{4+4n}$. If we use all the diagonals, then combining any two terms that involve powers of $t$ greater than zero will lead to terms of order at least $t^{4+4n}.$



Recall from \eref{zeroKn} that $K^B_n(i, j) =0 \quad \text{for all}~ 1 \leq i, j \leq n-1~\text{and}~ i+j \geq N_f$.
Therefore, we find that
\bea
\tr K^B_n &=& \sum_{l=0}^{\Delta} K^B_n (n+l,n+l)  \nn \\
&=& - \sum_{l=0}^{\Delta} c_{2n+1+2l} \nn \\
&=& - \sum_{l=0}^{\Delta} \sum_{j = 0}^{\Delta-2l}  {N_{f} \choose  2n+1+2l+ j}  {N_{f}+ j-1 \choose j}  (-1)^{j+2n+1+2l} t^{2j +2n+1+2l} \nn \\
&=& \sum_{l=0}^{\Delta} \sum_{j = 0}^{\Delta-2l}  {2n+1+\Delta \choose  \Delta-2l- j}  {2n+\Delta + j \choose j}  (-1)^{j} t^{2j +2n+1+2l}~.
\eea
Now let us extract the coefficient of $t^{2n+1+2k}$ (with $k \leq n$). This can be obtained when $j =  k-l$ (with the constraint $0 \leq j = k-l  \leq \Delta-2l$).  Therefore, such a coefficient can be written as
\bea
\CC_{2n+1+2k} := \sum _{l = 0}^{\min(k,\Delta-k)} {2n + 1+ \Delta \choose \Delta-k-l }{2n +\Delta +k-l \choose k-l}(-1)^{k - l}~. 
\eea

In order to compute this summation, we use the following lemma:  

\begin{lemma}  \label{estellesum} The following alternating sum can be evaluated as follows:
\bea
&& \sum_{l=0}^j (-1)^l { N \choose j - l }{N + h - l \choose h+1 - l } \nn\\
&=& { N \choose j}{N + h \choose h+1} - { N \choose j -1}{N +h -1 \choose h}  + \ldots + (-1)^j { N \choose 0}{N+h -j \choose h -j +1} \nn \\
&=& { N + h+1 \choose j}{N+h -j \choose h+1}~. \label{sumsum}
\eea
%
\end{lemma}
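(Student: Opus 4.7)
The plan is to prove the identity by recasting the alternating sum as a coefficient in a bivariate rational function and then invoking two elementary binomial identities.

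First I would use the coefficient-extraction representations $\binom{N}{j-l}=[u^{j-l}](1+u)^N$ and $\binom{N+h-l}{h+1-l}=[v^{h+1-l}](1+v)^{N+h-l}$, swap the order of summation, and close the geometric series in $l$. Since $\binom{N}{j-l}$ vanishes for $l>j$, extending the $l$-sum to all non-negative integers is harmless, yielding
\begin{equation*}
\sum_{l=0}^{j}(-1)^{l}\binom{N}{j-l}\binom{N+h-l}{h+1-l}
= [u^{j}v^{h+1}]\,\frac{(1+u)^{N}(1+v)^{N+h+1}}{1+v(1+u)}.
\end{equation*}

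Next I would expand $(1+v(1+u))^{-1}=\sum_{k\geq 0}(-v(1+u))^{k}$ as a geometric series in $v$, take $[v^{h+1}]$ followed by $[u^{j}]$, and reindex by $i=h+1-k$ to obtain
\begin{equation*}
(-1)^{h+1}\sum_{i=0}^{h+1}(-1)^{i}\binom{N+h+1}{i}\binom{N+h+1-i}{j}.
\end{equation*}

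The last step employs two classical identities. Applying the committee-selection identity $\binom{M}{i}\binom{M-i}{j}=\binom{M}{j}\binom{M-j}{i}$ with $M=N+h+1$ factors $\binom{N+h+1}{j}$ out of the sum. Then the partial alternating sum identity $\sum_{i=0}^{m}(-1)^{i}\binom{n}{i}=(-1)^{m}\binom{n-1}{m}$, which follows in a single step from Pascal's rule by telescoping, with $n=N+h+1-j$ and $m=h+1$ evaluates the remaining sum to $(-1)^{h+1}\binom{N+h-j}{h+1}$. The two factors of $(-1)^{h+1}$ cancel, producing exactly the right-hand side.

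The argument is a short chain of formal manipulations once the left-hand side has been recast as a coefficient extraction; the point requiring some care is that the $i$-sum naturally truncates at $m=h+1$ rather than at $n$, so the evaluation yields the nonzero value $(-1)^{h+1}\binom{N+h-j}{h+1}$ instead of the vanishing full alternating sum $(1-1)^{n}$.
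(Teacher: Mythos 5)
Your proof is correct, and it takes a genuinely different route from the paper's. The paper defines $F(N,j,h)$ to be the left-hand side, uses Pascal's rule to split the sum and obtain the recurrence $F(N,j,h)=F(N-1,j,h)+F(N-1,j-1,h)+F(N,j,h-1)$, and then establishes the closed form by induction on $N+h$ (with the verification that the claimed closed form satisfies the same recurrence, and the boundary checks, left largely to the reader). You instead package the sum as the coefficient $[u^{j}v^{h+1}]$ of the rational function $(1+u)^{N}(1+v)^{N+h+1}\big/\bigl(1+v(1+u)\bigr)$ --- the closure of the geometric series in $l$ is legitimate because the $l$-th summand has order at least $2l$ in $(u,v)$, and extending the sum to all $l\ge 0$ costs nothing since $\binom{N}{j-l}$ vanishes for $l>j$ --- and then evaluate it directly via trinomial revision $\binom{M}{i}\binom{M-i}{j}=\binom{M}{j}\binom{M-j}{i}$ and the truncated alternating sum $\sum_{i=0}^{m}(-1)^{i}\binom{n}{i}=(-1)^{m}\binom{n-1}{m}$. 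I checked your intermediate expression $(-1)^{h+1}\sum_{i=0}^{h+1}(-1)^{i}\binom{N+h+1}{i}\binom{N+h+1-i}{j}$ and the final cancellation of the two factors of $(-1)^{h+1}$ against small cases; both are right. What your approach buys is a non-inductive, fully mechanical derivation in which each step is a named classical identity and the truncation point $m=h+1$ of the alternating sum is made explicit; what the paper's approach buys is that it requires nothing beyond Pascal's rule, at the price of an induction whose inductive step and edge cases are only sketched.
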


\begin{proof}
Let us define $F(N, j, h)$ to be the sum on the left hand side. Using the property that 
\bea 
{N \choose k} = {N-1 \choose k} + {N-1 \choose k-1}~, \label{binomprop}
\eea
we can split the alternating sum into three terms and have the recurrence relation:
\bea 
F(N, j, h)  = F(N - 1, j, h) + F(N-1, j -1, h) +F(N, j, h-1)~. \label{recurrence}
\eea

Now fix $j \leq N$ and do an induction on the quantity $N + h$. It is convenient to consider the pairs $(N, h)$ on an integer lattice.  Let us focus on the line $N+h=c$, where $c$ is a constant.  Suppose that every point under this line satisfies the lemma.  In order to prove the statement for the line $N+h =c+1$, we simply go one step to the left and then one step down according to the identity \eref{recurrence}.  At the point $(0, 0)$, this is clearly true so the lemma holds.  We need to also check that the formula works on the ``edges'' of the lattice where $N$ or $h$ are zero, but this is clearly true. Finally, it is easy to check that the right hand side of \eref{recurrence} adds up to \eref{sumsum}. This yields the result.
\end{proof}

Using Lemma \eref{estellesum} by setting $N = 2n+1+\Delta,~ j = \Delta-k,~ h= k-1$, we find that for $k \leq \Delta$ the coefficient of $t^{2n+1+2k}$ is
\bea 
\CC_{2n+1+2k} = (-1)^{k}{2n + 1+\Delta +k \choose \Delta -k}{ 2n +2k \choose k}~.
\eea
Thus,  the leading behaviour of the numerator $\det(\BU + K^B_n)$ in the limit $n \rightarrow \infty$ is
\bea
\det (\BU+K^B_n ) &\sim& 1+ \sum_{k=0}^\Delta \CC_{2n+1+2k} t^{2n+1+2k} + O(t^{4n+4}) \nn \\
&=& 1+ \sum_{k=0}^\Delta (-1)^k {2n + 1+\Delta +k \choose \Delta -k}{ 2n +2k \choose k} t^{2n+1+2k} + O(t^{4n+4}) \nn \\
&=& 1+ \sum_{k=0}^\Delta (-1)^k {N_f+k \choose \Delta -k}{ 2n +2k \choose k} t^{2n+1+2k} + O(t^{4n+4}) \nn \\
&=& 1 + t^{N_c} {N_f \choose \Delta} {}_3F_2 \left( \frac{1}{2}N_c,-\Delta ,N_f+1; \frac{1}{2}N_c +1,N_c; t^2 \right) \nn \\
&& + O(t^{2N_c+2})~. \label{asympBn}
\eea
It follows that the asymptotic formula (as $n \rightarrow \infty$) for the Hilbert series of $SO(N_c = 2n+1)$ SQCD with $N_f$ flavours and $\Delta = N_f - N_c = O(1)$ is
\bea
g_{N_f, SO(N_c)} (t) &\sim& \frac{1}{(1-t^2)^{N_f(N_f+1)/2} } \Big[ 1 + t^{N_c} {N_f \choose \Delta}  \times \nn \\
&& {}_3F_2 \left( \frac{1}{2}N_c,-\Delta ,N_f+1; \frac{1}{2}N_c +1,N_c; t^2 \right) + O(t^{2N_c+2}) \Big]~. \label{asympHSBn}
\eea

\paragraph{Example: $N_f = 2n+3$.} We have
\bea
\det (\BU+K^B_n ) &\sim&  1+(1+n) (3+2 n) t^{1+2 n}-4 (1+n) (2+n) t^{3+2 n} \nn \\
&& +(2 + n) (3 + 2 n) t^{5+2 n} + O(t^{4n+4})~.
\eea
This is in agreement with the `first half' of the numerator of \eref{2np3Bn}.

\section{$SO(2n)$ SQCD with $N_f$ flavours} \label{sec:Dn}
In this section, we examine the Hilbert series of SQCD with $D_n = SO(2n)$ gauge group and $N_f$ flavours of quarks.  The analogous calculations that were done in $B_n = SO(2n+1)$ case are done here.  Subsequently, we will see below that such Hilbert series can also be recast in terms of a Hankel determinant but with different parameters from those of $B_n$ case.  Hence we can apply a version of the EXDT formula to compute the determinant in a similar way as before.  The only essential difference is the use of EXDT IV instead of the II case employed earlier. We supply the details for completeness sake.

\subsection{The computations of Hilbert series}
The Haar measures of the group $D_n = SO(2n)$ is given by
\bea
\int \ud \mu_{D_n} &=&  \frac{2^{-(n-1)}}{(2 \pi i)^{n} n!} \oint \limits_{|z_1| =1}  \frac{\ud z_1}{z_1} \cdots \oint \limits_{|z_{n}| =1}  \frac{\ud z_{n}}{z_{n}} \left | \Delta_{n} \left(z + \frac{1}{z} \right) \right|^2~.
\eea
The refined Hilbert series for $SO(2n)$ SQCD with $N_f$ flavours can be written as
\bea
g_{N_f, D_n} (t ,x) &=& \int \ud \mu_{D_n} (z_1, \ldots, z_n)   \PE \left[ [1,0, \ldots,0]_x \sum_{a=1}^n \left( z_a + \frac{1}{z_a} \right) t \right] \nn \\
&=&  \frac{2^{-(n-1)}}{(2 \pi i)^{n} n!} \oint \limits_{|z_1| =1}  \frac{\ud z_1}{z_1} \cdots \oint \limits_{|z_{n}| =1}  \frac{\ud z_{n}}{z_{n}} \left | \Delta_{n} \left(z + \frac{1}{z} \right) \right|^2  \nn \\
&& \qquad \qquad \qquad \times \PE \left[ [1,0, \ldots,0]_x \sum_{a=1}^n \left( z_a + \frac{1}{z_a} \right) t \right] ~. \label{inthsrefnfdn}
\eea
Setting $x_1, \ldots, x_n =1$, we obtain the unrefined Hilbert series
\bea
g_{N_f, D_n}(t) &=&  \frac{2^{-(n-1)}}{(2 \pi i)^{n} n!} \oint \limits_{|z_1| =1}  \frac{\ud z_1}{z_1} \cdots \oint \limits_{|z_{n}| =1}  \frac{\ud z_{n}}{z_{n}} \left | \Delta_{n} \left(z + \frac{1}{z} \right) \right|^2   \prod_{a=1}^n \frac{1}{(1-t z_a)^{N_f} (1-t/z_a)^{N_f}}~. 
\nn \\ \label{gNfDnunref}
\eea

\subsection{The Hankel determinant}
In this section, we rewrite the multi-complex-contour integrals \eref{gNfDnunref} in terms of integrals over the intervals $[-1,1]$ and $[0,1]$. The latter form of the integrals allow us to recast the Hilbert series in terms of the Hankel determinant.

\paragraph{Integrals over the intervals $[-1,1]^n$.} Writing $z_a = e^{i \theta_a}$ and  $y_a = \cos \theta_a$ in \eref{gNfDnunref}, we obtain
{\small
\bea
g_{N_f, D_n}(t) 
&=&\frac{2^{(n-1)^2+n}}{(2 \pi)^n n!}  \int \limits_{-1}^1 \ud y_1 \cdots \int \limits_{-1}^1 \ud y_n  \prod_{1 \leq a<b \leq n} (y_a - y_b)^2 \prod_{a=1}^n  \frac{(1- y_a)^{-1/2}(1+y_a)^{-1/2}}{ (1- 2 t y_a +t^2)^{N_f} } \label{intDnhankel}~. \qquad \qquad
\eea}

\paragraph{Integrals over the intervals $[0,1]^n$.}  Let us change the variable
\bea
y_a = 2 \zeta_a-1~. 
\eea
Therefore, we have
{\small
\bea
g_{N_f, D_n} (t) 
&=& \frac{2^{(n-1)^2+n^2}}{(2 \pi)^n n!}  \frac{1}{(-4t)^{n N_f}} \int \limits_{0}^1 \ud \zeta_1 \cdots \int \limits_{0}^1 \ud \zeta_n \prod_{1 \leq a < b \leq n} (\zeta_a - \zeta_b)^2 \prod_{a=1}^n  \frac{\zeta_a^{-1/2} (1- \zeta_a)^{-1/2}}{ \left(\zeta_a- \frac{(1+t)^2}{4t} \right)^{N_f} } \nn \\
&=& C_{N_f,D_n}(t) D_{N_f,D_n} (T)~, \label{HShankelDn}
\eea}
where the factor $C_{N_f,D_n}$ is given by
\bea
C_{N_f,D_n}(t) :=\frac{2^{(n-1)^2+n^2}}{(2 \pi)^n} \frac{1}{(-4 t)^{n N_f}}~,
\eea
and $D_{N_f,D_n} (T)$ and the variable $T$ are given by
\bea
T &:=& \frac{(1+t)^2}{4t}~, \\
D_{N_f,D_n} (T) &:=& \frac{1}{n!} \int \limits_{0}^1 \ud \zeta_1 \cdots \int \limits_{0}^1 \ud \zeta_n \prod_{1 \leq a < b \leq n} (\zeta_a - \zeta_b)^2 \prod_{k=1}^n w(\zeta_a;T)  \nn\\
&=& \det \left( \int_0^1 \ud \zeta~w(\zeta;T) \zeta^{i+j} \right)_{i,j =0}^{n-1}~, \label{Hankelsoeven} \\
w(\zeta_a;T) &:=&   \zeta_a^{-1/2} (1- \zeta_a)^{-1/2}(\zeta_a - T)^{-N_f}~.
\eea
Indeed, $D_{N_f,D_n} (T)$ is the {\it Hankel determinant} with the perturbed Jacobi weight
\bea
w(\zeta;T) &:=&   \zeta^{\alpha} (1- \zeta)^{\beta}(\zeta - T)^{\gamma}~,
\eea
with the parameters
\bea
\alpha = -1/2, \qquad \beta= -1/2, \qquad \gamma =- N_f~. \label{abcDn}
\eea

\paragraph{Palindromic numerator of the Hilbert series.}  Observing that $T$ is invariant under the transformation $t \mapsto 1/t$, it is clear that
\bea
g_{N_f, D_n} (1/t) =  t^{2n N_f} g_{N_f, D_n} (t)=   (-t)^{2n N_f} g_{N_f, D_n} (t)= (-t)^{N_c N_f} g_{N_f, D_n} (t)~;
\eea
in other words, the numerator of the Hilbert series is palindromic.  Note that the above equalities are in the same form as \eref{invertBn}.

\subsection{The EXDT IV formula} 
In this section, we compute the integrals \eref{gNfDnunref} using the EXDT IV formula.   Indeed, the factor $(1- y_a)^{-1/2}(1+y_a)^{-1/2}$ in \eref{intDnhankel} implies that the Hankel determinant \eref{Hankelsoeven} falls into Case IV of Propositions 3.1 and 3.3 of \cite{BE2008} (see also Page 16 of \cite{PJW2}).  Subsequently, we apply Proposition 4.1 of \cite{BE2008} to compute exact expressions of Hilbert series.

\subsubsection{Computation of the Hankel determinant}
Let us first define various necessary quantities in order to apply the EXDT IV formula.

\paragraph{The symbol and its factorisation.} The symbol for our problem is
\bea
a(z) :=  (1-t z)^{-N_f} (1-t/z)^{-N_f}~. \label{symbDn}
\eea
Put $a(z) = a_+(z) \ta_+(z)$ with
\bea
a_+(z) =  (1-tz)^{-N_f}~, \qquad \ta_+(z) = (1-t/z)^{-N_f}~.
\eea

\paragraph{Fourier coefficients and related matrices.} 
The Fourier coefficients $(a_+^{-1})_k$ (with $k \in \BZ$) of a function $a_+^{-1}$ are given by
\bea
(a_+^{-1})_k := \frac{1}{2 \pi i} \oint_{|z|=1} \frac{\ud z}{z} z^{-k} a_+^{-1} = \frac{1}{2 \pi i} \oint_{|z|=1} \frac{\ud z}{z} z^{-k}  (1-t z)^{N_f} = {N_f \choose k} (-t)^k~.
\eea
The Fourier coefficients $(z \ta_+)_k$ (with $k \in \BZ$) of a function $z \ta_+$ are given by
\bea
(z \ta_+)_k &:=& \frac{1}{2 \pi i} \oint_{|z|=1} \frac{\ud z}{z} z^{-k} (z \ta_+) \nn \\
&=& \frac{1}{2 \pi i} \oint_{|z|=1} \frac{\ud z}{z} z^{-k+1} (1-t/z)^{-N_f}
= \begin{cases} N_f t & \mbox{if } k = 0  \\ 1  & \mbox{if } k =1 \\ 0 & \mbox{otherwise.} \end{cases}
\eea
The Fourier coefficients $(z a_+^{-1} \ta_+)_k$ (with $k \in \BZ$) of a function $z a_+^{-1} \ta_+$ are given by
\bea
(z a_+^{-1} \ta_+)_k := \frac{1}{2 \pi i} \oint_{|z|=1} \frac{\ud z}{z} z^{-k} (z a_+^{-1} \ta_+) = \frac{1}{2 \pi i} \oint_{|z|=1} \frac{\ud z}{z} z^{-(k-1)}  \left( \frac{1- tz}{1-t/z} \right)^{N_f} = c_{k-1},
\eea
where $c_{k-1}$ is given by \eref{ck2F1}.

\paragraph{The matrices $K^D$ and $K^D_n$.}  From Case IV on Page 13 of \cite{BE2008}, we define an infinite matrix $K^D$ to be the such that the $(i, j)$-entry (with $i, j =0,1,2,\ldots$) is given by
\bea
K^D(i,j) &=& (z a_+^{-1} \ta_+)_{i+j +1} - \sum_{l=0}^i (a^{-1}_+)_{i-l} (z  \ta_+)_{l+j+1}~,
\eea
where the superscript $D$ indicates the gauge group $D_n = SO(2n)$.
We can compute $K^D$ explicitly as follows:
\bea
K^D(i,j) = c_{i+j} - \sum_{l=0}^i {N_f \choose i-l} (-t)^{i-l} \delta_{l+j,0} = c_{i+j} - \delta_{j,0} {N_f \choose i} (-t)^{i}~. \label{Kijdef}
\eea
Let us define the matrix $K^D_n$ to be
\bea
K^D_n = Q_n K^D Q_n~,
\eea
where $Q_n$ is given by \eref{Qndef}.
Since $n \geq 1$, it follows that
\bea
K^D_n (i,j)=  \left\{ 
\begin{array}{l l}
0 &\quad \text{for $1 \leq i, j \leq n-1$ and $i+j \geq N_f+1$}\\
c_{i+j} &\quad \text{otherwise}~.
\end{array} \right.  \label{KD}
\eea


\paragraph{The explicit formula.} The integrals \eref{intDnhankel} can be computed from the following formula (see Proposition 4.1 of \cite{BE2008}):
\bea
g_{N_f, D_n}(t) = G(a)^n \widehat{F}_{IV} (a) \det (\BU + K^D_n) ~,
\eea
where the function $G(a)$ is defined by
\bea
G(a) := \exp \left(\frac{1}{2 \pi} \oint_{|z|=1} \frac{\ud z}{z} \log a(z) \right) = 1~,
\eea
and the function $\widehat{F}_{IV} (a)$ is given by (see Proposition 3.3 of \cite{BE2008}):
\bea
\widehat{F}_{IV} (a) &=& \exp \left( \sum_{n=1}^\infty [\log a]_{2n} +\frac{1}{2} \sum_{n=1}^\infty n [\log a]_n^2 \right) \nn \\
&=&  \exp \left(N_f  \sum_{n=1}^\infty \frac{t^{2n}}{2n} +  \frac{1}{2}N_f^2 \sum_{n=1}^\infty n \times \frac{t^{2n}}{n^2} \right) \nn \\
&=& \exp \left( -\frac{1}{2}N_f \log \left( 1-t^2 \right) - \frac{1}{2} N_f^2 \log(1-t^2) \right) \nn \\
&=& (1-t^2)^{-N_f(N_f +1)/2}~.
\eea

\paragraph{The Hilbert series.} The Hilbert series is then given by
\bea
g_{N_f, D_n} (t) = G(a)^n \widehat{F}_{IV} (a) \det (\BU + K^D_n) =  \frac{\det (\BU + K^D_n)}{(1-t^2)^{N_f(N_f +1)/2}}~. \label{unrefBOBEHSDn}
\eea 

\subsubsection{Explicit examples and asymptotic formula}
Below we give certain explicit examples.

\paragraph{The case of $N_f < 2n$.}  In this case $K^D_n (i,j ) =0$ for all $i, j$.  Therefore, the Hilbert series is
\bea
g_{N_f < 2n} (t) = \frac{1}{(1-t^2)^{N_f(N_f+1)/2}}~. \label{gNfl2n}
\eea

\paragraph{The case of $N_f = 2n$.}  In this case 
\bea
K^D_n (i,j ) = \left\{ 
  \begin{array}{l l}
    t^{N_f} & \quad \text{if $i = j =n$}\\
    0 & \quad \text{otherwise}~.
  \end{array} \right. 
\eea
The Hilbert series is thus
\bea
g_{N_f = 2n, D_n} (t) = \frac{1+t^{N_f}}{(1-t^2)^{N_f(N_f+1)/2}}~. \label{fNfeq2n}
\eea

\paragraph{The case of $N_f = 2n+1$.}  The non-trivial block of the matrix $K^D_n$ is given by
\bea
\left(
\begin{array}{cc}
 (1+2 n) t^{2 n} \left(1-t^2\right) \qquad & -t^{1+2 n} \\
 -t^{1+2 n} & 0
\end{array}
\right)~.
\eea
Therefore, the Hilbert series is
\bea
g_{N_f = 2n+1, D_n} (t) &=& \frac{1+(1+2 n) t^{2 n}-(1+2 n) t^{2+2 n}-t^{2+4 n}}{ (1-t^2)^{N_f(N_f+1)/2} }~. \label{2np1Dn}
\eea

\paragraph{The case of $N_f = 2n+2$.}  The non-trivial block of the matrix $K^D_n$ is given by
{\small
\bea
\left(
\begin{array}{ccc}
 (1+n) t^{2 n} \left(1-t^2\right) \left(1+2 n-(3+2 n) t^2\right) \qquad & 2 (1+n) t^{1+2 n} \left(-1+t^2\right) \qquad & t^{2+2 n} \\
 2 (1+n) t^{1+2 n} \left(-1+t^2\right) & t^{2+2 n} & 0 \\
 t^{2+2 n} & 0 & 0
\end{array}
\right)~. \nn \\
\eea}
Therefore, the Hilbert series is
\bea
g_{N_f = 2n+2, D_n} (t) &=& \frac{1}{(1-t^2)^{N_f(N_f+1)/2}} \Big[ 1+(1 + n) (1 + 2 n) t^{2 n}-(1+2 n) (3+2 n) t^{2+2 n} \nn \\
&& +(1+n) (3+2 n) t^{4+2 n}-(1+n) (3+2 n) t^{2+4 n}+(1+2 n) (3+2 n) t^{4+4 n} \nn \\
&& -(1+n) (1+2 n) t^{6+4 n}-t^{6+6 n} \Big]~. \label{2np2Dn}
\eea
%

\subsubsection*{Comments on the results}  
Let us state some comments on the above results.

\ben
\item Given an $n$ and a $\Delta := N_f-n$, the Hilbert series of $D_n=SO(2n)$ SQCD with $n+\Delta$ flavours can be obtained from the Hilbert series of $B_n=SO(2n+1)$ SQCD with $n+\Delta$ flavours case by setting $n$ in the latter to $n-\frac{1}{2}$, \eg~ the Hilbert series \eref{2np2Dn} can be obtained from \eref{2np3Bn} in such a way.  This fact follows from the definition  \eref{KD} of the matrix $K^D_n$.

\item It follows from Case IV of Proposition 3.1 and Proposition 4.1 of \cite{BE2008} that the Hilbert series is a rational function in $t$.  Furthermore, by considering the transformation property of $a(z)$ in \eref{symbDn} under $t \mapsto 1/t$, it is immediate that the numerator of the Hilbert series is palindromic.
\een

\subsubsection*{Asymptotics of unrefined Hilbert series}
Let us examine the asymptotic behaviour of $\det(\BU + K^D_n)$ as $n \rightarrow \infty$.  Since both $2n$ and $2n+1$ are $O(2n)$ in this limit, one should anticipate that such an asymptotic formula should be equal to that for $\det(\BU + K^B_n)$ given by \eref{asympBn}.  In order to see this, we proceed as follows.

Let $\Delta = N_f - 2n$.  For $\Delta \leq 0$, the exact results are given by \eref{gNfl2n} and \eref{fNfeq2n}.  We shall henceforth suppose that $\Delta >0$.
As in the above examples, we can obtain $\det(\BU + K^D_n)$ from $\det(\BU + K^B_n)$ by setting $n$ in the latter to $n - \frac{1}{2}$.
Thus, from \eref{asympBn}, we obtain
\bea
\det (\BU+K^D_n ) &\sim& 1+ \sum_{k=0}^\Delta (-1)^k {2n + \Delta+k \choose \Delta -k}{ 2n -1+2k \choose k} t^{2n+2k} + O(t^{4n+2}) \nn \\
&=& 1 + t^{N_c} {N_f \choose \Delta} {}_3F_2 \left( \frac{1}{2}N_c,-\Delta ,N_f+1; \frac{1}{2}N_c +1,N_c; t^2 \right) \nn \\
&& + O(t^{2N_c+2})~.   \label{asympdetKDn}
\eea
Observe that the second equality of \eref{asympdetKDn} is equal to the last equality of \eref{asympBn}, as expected.  Therefore, as expected, the asymptotic formula (as $n \rightarrow \infty$) for the Hilbert series of $SO(N_c = 2n)$ SQCD with $N_f$ flavours and $\Delta = N_f - N_c = O(1)$ is the same expression as in  \eref{asympHSBn}:
\bea \label{asympHSDn}
g_{N_f, SO(N_c)} (t) &\sim& \frac{1}{(1-t^2)^{N_f(N_f+1)/2} } \Big[ 1 + t^{N_c} {N_f \choose \Delta}  \times \nn \\
&& {}_3F_2 \left( \frac{1}{2}N_c,-\Delta ,N_f+1; \frac{1}{2}N_c +1,N_c; t^2 \right) + O(t^{2N_c+2}) \Big]~.
\eea




\section{$Sp(n)$ SQCD with $N_f$ flavours} \label{sec:Cn}
In this section, we examine the Hilbert series of SQCD with $C_n = Sp(n)$ gauge group and $N_f$ flavours of quarks. 

\subsection{The computations of Hilbert series}
The Haar measure of $Sp(n)$ is given by
\bea
\int \ud \mu_{C_n} &=&  \frac{1}{(2\pi )^n n!} \oint \limits_{|z_1| =1}  \frac{\ud z_1}{z_1} \cdots \oint \limits_{|z_{n}| =1}  \frac{\ud z_{n}}{z_{n}} \left | \Delta_{n} \left(z + \frac{1}{z} \right) \right|^2  \prod_{a=1}^n \left[ 1- \frac{1}{2} \left(z^2_a + \frac{1}{z^2_a} \right)   \right]~. \nn \\
\eea
The refined Hilbert series for $Sp(n)$ SQCD with $N_f$ flavours can be written as
\bea
g_{N_f, C_n} (t ,x) &=& \int \ud \mu_{C_n} (z_1, \ldots, z_n)   \PE \left[ [1,0, \ldots,0]_x \sum_{a=1}^{n} \left( z_a + \frac{1}{z_a} \right) t \right] \nn \\
&=&  \frac{1}{(2\pi )^n n!} \oint \limits_{|z_1| =1}  \frac{\ud z_1}{z_1} \cdots \oint \limits_{|z_{n}| =1}  \frac{\ud z_{n}}{z_{n}} \left | \Delta_{n} \left(z + \frac{1}{z} \right) \right|^2  \prod_{a=1}^n \left[ 1- \frac{1}{2} \left(z^2_a + \frac{1}{z^2_a} \right)   \right] \nn \\
&& \qquad \times  \PE \left[ [1,0, \ldots,0]_x \sum_{a=1}^{n} \left( z_a + \frac{1}{z_a} \right) t \right]~. \label{inthsrefnfcn}
\eea
Setting $x_1, \ldots, x_n =1$, we obtain the unrefined Hilbert series
\bea
g_{N_f, C_n} (t) 
= \frac{1}{(2\pi )^n n!} \oint \limits_{|z_1| =1}  \frac{\ud z_1}{z_1} \cdots \oint \limits_{|z_{n}| =1}  \frac{\ud z_{n}}{z_{n}} \left | \Delta_{n} \left(z + \frac{1}{z} \right) \right|^2  \prod_{a=1}^n \frac{\left[ 1- \frac{1}{2} \left(z^2_a + \frac{1}{z^2_a} \right) \right]}{(1-t z_a)^{2 N_f} (1-t/z_a)^{2N_f} }~.  \label{gnfCnoint} \nn \\
\eea

\subsection{The Hankel determinant}
In this section, we rewrite the integral form \eref{gnfCnoint} of the unrefined Hilbert series in another way and then recast it in terms of the Hankel determinant.

\paragraph{Integrals over the intervals $[-1,1]^n$.} Writing  $z_a = e^{i \theta_a}$ and $y_a = \cos \theta_a$ in \eref{gnfCnoint}, we obtain
\bea
g_{N_f, C_n} (t) 
&=& \frac{2^{n^2+n}}{(2 \pi)^n n!}\int \limits_{-1}^1 \ud y_1 \cdots \int \limits_{-1}^1 \ud y_n  \prod_{1 \leq a < b \leq n} (y_a - y_b)^2 \prod_{a=1}^n  \frac{(1- y_a)^{1/2}(1+y_a)^{1/2}}{ (1- 2 t y_a +t^2)^{2N_f} }~. \label{mainintCn} 
\eea

\paragraph{Integrals over the intervals $[0,1]^n$.}  We can further rewrite the Hilbert series in terms of the Hankel determinants as follows.  Let us change the variable
\bea
y_a = 2 \zeta_a-1~. 
\eea
Therefore, we have
{\small
\bea
g_{N_f, C_n} (t) 
&=& \frac{2^{2n (n+1)}}{(2 \pi)^n n!}\frac{1}{(-4t)^{2n N_f}} \int \limits_{0}^1 \ud \zeta_1 \cdots \int \limits_{0}^1 \ud \zeta_n \prod_{1 \leq a < b \leq n} (\zeta_a - \zeta_b)^2 \prod_{a=1}^n  \frac{\zeta_a^{1/2} (1- \zeta_a)^{1/2}}{ \left(\zeta_a- \frac{(1+t)^2}{4t} \right)^{2N_f} } \nn \\
&=& C_{N_f,C_n} D_{N_f,C_n} (T)~, \label{HShankelCn}
\eea}
where the factor $C_{N_f,C_n}$ is given by
\bea
C_{N_f,C_n} :=\frac{2^{2n (n+1)}}{(2 \pi)^n}\frac{1}{(4t)^{2n N_f}}~,
\eea
and $D_{N_f,C_n} (T)$ and the variable $T$ are given by
\bea
T &:=& \frac{(1+t)^2}{4t}~, \\
D_{N_f,C_n} (T) &:=& \frac{1}{n!} \int \limits_{0}^1 \ud \zeta_1 \cdots \int \limits_{0}^1 \ud \zeta_n \prod_{1 \leq i < j \leq n} (\zeta_i - \zeta_j)^2 \prod_{k=1}^n w(\zeta_a;T) \nn \\
&=& \det \left( \int_0^1 \ud \zeta~w(\zeta;T) \zeta^{i+j} \right)_{i,j =0}^{n-1}~, \label{HankelSp} \\
w(\zeta_a;T) &:=&   \zeta_a^{1/2} (1- \zeta_a)^{1/2}(\zeta_a - T)^{-2N_f}~. \eea
Indeed, $D_{N_f,C_n} (T)$ is the {\it Hankel determinant} with the perturbed Jacobi weight
\bea
w(\zeta;T) &=&   \zeta^{\alpha} (1- \zeta)^{\beta}(\zeta - T)^{\gamma}~,
\eea
with
\bea
\alpha = 1/2, \qquad \beta= 1/2, \qquad \gamma =- 2N_f~. \label{abcCn}
\eea

\paragraph{Palindromic numerator of the Hilbert series.}  Observing that $T$ is invariant under the transformation $t \mapsto 1/t$. Then it follows from \eref{HShankelCn} that
\bea
g_{N_f, C_n} (1/t) =  t^{4n N_f} g_{N_f, C_n} (t)~;
\eea
in other words, the numerator of the Hilbert series is palindromic. 

%

\subsection{The EXDT III formula}
The integrals \eref{gnfCnoint} can be computed exactly using the EXDT III formula.   The factor $(1- y_a)^{1/2}(1+y_a)^{1/2}$ in \eref{mainintCn} implies that the Hankel determinant \eref{HankelSp} correspond to Case III of Propositions 3.1 and 3.3 of \cite{BE2008} (see also Page 16 of \cite{PJW2}).  Subsequently, we apply Proposition 4.1 of \cite{BE2008} to compute exact expressions of the Hilbert series.

\subsubsection{Computation of the Hankel determinant}
We now apply the EXDT III formula to the integrals \eref{gnfCnoint}.  The necessary quantities for such a formula are defined below.

\paragraph{The symbol and its factorisation.} The symbol for our problem is
\bea
a(z) :=  (1-t z)^{-2N_f} (1-t/z)^{-2N_f}~. \label{symbCn}
\eea
Put $a(z) = a_+(z) \ta_+(z)$ with
\bea
a_+(z) =  (1-tz)^{-2N_f}~, \qquad \ta_+(z) = (1-t/z)^{-2N_f}~.
\eea

\paragraph{Fourier coefficients.} The Fourier coefficients $(z^{-1} a_+^{-1} \ta_+)_k$ (with $k \in \BZ$) of a function $z a_+^{-1} \ta_+$ are defined by
\bea
(z^{-1} a_+^{-1} \ta_+)_k &:=& \frac{1}{2 \pi i} \oint_{|z|=1} \frac{\ud z}{z} z^{-k} (z^{-1} a_+^{-1} \ta_+) \nn \\
&=& \frac{1}{2 \pi i} \oint_{|z|=1} \frac{\ud z}{z} z^{-(k+1)}  \left( \frac{1- tz}{1-t/z} \right)^{2N_f} \nn \\
&=& C_{k+1}~,
\eea
where
\bea
C_k = (-t)^k {2N_f \choose k} {}_2F_1 (k - 2N_f, 2N_f; k+1; t^2)~. \label{coeffCk}
\eea
Note that this is the same as $c_k$ given by \eref{ck2F1}, with $N_f$ replaced by $2N_f$. 

\paragraph{The matrices $K^C$ and $K^C_n$.}  From Case III on Page 13 of \cite{BE2008}, we define an infinite matrix $K^C$ to be the such that the $(i, j)$-entry (with $i, j =0,1,2,\ldots$) is given by
\bea
K^C(i,j) = -(z^{-1} a_+^{-1} \ta_+)_{i+j+1} = -C_{i+j+2}~,
\eea
where the superscript $C$ indicates the gauge group $C_n = Sp(n)$.  We also define the matrix $K^C_n$ to be
\bea
K^C_n = Q_n K^C Q_n~,
\eea
where $Q_n$ is given by \eref{Qndef}.
It follows that
\bea \label{zeroKCn}
K^C_n(i, j) =  \left\{ 
\begin{array}{ll}
0 &\quad \text{for $0 \leq i, j \leq n-1$,  $i+j \geq 2N_f-1$}  \\
-C_{i+j+2} &\quad \text{otherwise}~.
\end{array} \right.
\eea

\paragraph{The EXDT III formula.} The integrals \eref{gnfCnoint} can be computed from the following formula (see Proposition 4.1 of \cite{BE2008}):
\bea
\CI_{N_f, C_n}(t) = G(a)^n \widehat{F}_{III} (a) \det (\BU + K^C_n) ~,
\eea
where the function $G(a)$ is defined by
\bea
G(a) := \exp \left(\frac{1}{2 \pi} \oint_{|z|=1} \frac{\ud z}{z} \log a(z) \right) = 1~,
\eea
and the function $\widehat{F}_{III} (a)$ is given by (see Proposition 3.3 of \cite{BE2008}):
\bea
\widehat{F}_{III} (a) &=& \exp \left( -\sum_{n=1}^\infty [\log a]_{2n} +\frac{1}{2} \sum_{n=1}^\infty n [\log a]_n^2 \right) \nn \\
&=&  \exp \left(-2N_f  \sum_{n=1}^\infty \frac{t^{2n}}{2n} +  \frac{1}{2}(2N_f)^2 \sum_{n=1}^\infty n \times \frac{t^{2n}}{n^2} \right) \nn \\
&=& \exp \left( N_f \log \left( 1-t^2 \right) - 2N_f^2 \log(1-t^2) \right) \nn \\
&=& (1-t^2)^{-N_f(2N_f-1)}~.
\eea

\paragraph{The Hilbert series.} The Hilbert series is then given by
\bea
g_{N_f, C_n} (t) &=& G(a)^n \widehat{F}_{III} (a) \det (\BU + K^C_n) \nn \\
&=&  \frac{\det (\BU + K^C_n)}{(1-t^2)^{N_f(2N_f-1)}}~. \label{unrefBOBEHSCn}
\eea 
Note that, as discussed in \cite{Hanany:2008kn, Gray:2008yu}, the numerator of the unrefined Hilbert series $g_{N_f,C_n}(t)$ is a palindromic polynomial.

\subsubsection*{Some examples}
Below we give certain explicit examples.

\paragraph{The case of $N_f \leq n$.}  In this case $K^C_n (i,j ) =0$ for all $i, j$.  Therefore, the Hilbert series is
\bea
g_{N_f \leq n} (t) = \frac{1}{(1-t^2)^{N_f(2N_f-1)}}~. \label{NfleqnCn}
\eea

\paragraph{The case of $N_f = n+1$.}  In this case 
\bea
K^C_n (i,j ) = \left\{ 
  \begin{array}{l l}
    -t^{2N_f} & \quad \text{if $i = j =n$}\\
    0 & \quad \text{otherwise}~.
  \end{array} \right. 
\eea
The Hilbert series is thus
\bea
g_{N_f = n+1, C_n} (t) = \frac{1-t^{2 N_f}}{(1-t^2)^{N_f(2N_f-1)}}~. \label{HSnp1Cn}
\eea

\paragraph{The case of $N_f = n+2$.}  The non-trivial block of the matrix $K^C_n$ is given by
{\small
\bea
\left(
\begin{array}{ccc}
 -(2+n) t^{2+2 n} \left(1-t^2\right) \left(3-5 t^2+2 n \left(1-t^2\right)\right) \qquad & 2 (2+n) t^{3+2 n} \left(1-t^2\right) \qquad & -t^{4+2 n} \\
 2 (2+n) t^{3+2 n} \left(1-t^2\right) & -t^{4+2 n} & 0 \\
 -t^{4+2 n} & 0 & 0
\end{array}
\right)~. \nn \\
\eea}
Therefore, the Hilbert series is
\bea
g_{N_f = n+2, C_n} (t) &=& \frac{1}{ (1-t^2)^{N_f(2N_f-1)} } \Big[ 1-(2+n) (3+2 n) t^{2+2 n}+(3+2 n) (5+2 n) t^{4+2 n} \nn \\
&& -(2+n) (5+2 n) t^{6+2 n}-(2+n) (5+2 n) t^{6+4 n}+(3+2 n) (5+2 n) t^{8+4 n} \nn \\
&& -(2+n) (3+2 n) t^{10+4 n}+t^{12+6n} \Big]~. \label{np2Cn}
\eea

The Hilbert series \eref{NfleqnCn} and \eref{HSnp1Cn} are in agreement with the ones obtained in \cite{Hanany:2008kn}.  They indicate that the moduli spaces for $N_f \leq n$ are freely generated, and the one for $N_f = n+1$ is a complete intersection.  Note that the unrefined Hilbert series \eref{np2Cn} for general $n$ has not been obtained before in \cite{Hanany:2008kn}.  Other results for larger $\Delta = N_f -n$ can also been obtained in a straightforward way.   

It follows from Case III of Proposition 3.1 and Proposition 4.1 of \cite{BE2008} that the Hilbert series is a rational function in $t$.  Furthermore, by considering the transformation property of $a(z)$ in \eref{symbCn} under $t \mapsto 1/t$, it is immediate that the numerator of the Hilbert series is palindromic.

\subsubsection{Asymptotics of unrefined Hilbert series}
Let us first focus on the numerator $\det (\BU + K^C_n)$ of the Hilbert series \eref{unrefBOBEHSCn}.   The asymptotic formulae for $\det( \BU + K^C_n)$  can be obtained in a similar way to that for $\det( \BU + K^B_n)$, since the coefficients $C_k$ in \eref{coeffCk} differ from the coefficients $c_k$ in \eref{ck2F1} only by changing $N_f$ in the latter to $2N_f$ in the former.

Consider the Taylor expansion of $\det (\BU + K^C_n)$ as in \eref{detformula}. The smallest order term in $1+ \tr K^C_n$ is $O(t^{2 (n-1+2 \Delta)})$ as $n \rightarrow \infty$.  For the terms $\left[ (\tr K^C_n)^2 - \tr ({K^C_n}^2) \right]$, it can be checked that the highest contribution is $O(t^{4n+6})$.  The terms with higher traces are smaller.  

In order that there are no overlaps between $1+ \tr K^C_n$ and $\left[ (\tr K^C_n)^2 - \tr ({K^C_n}^2) \right]$, let us {\it assume} that $\Delta$ is $O(1)$ as $n \rightarrow \infty$.  Then, the coefficients of $t^{2n+1+2k}$, for all $ 0 \leq k \leq 2\Delta-2$, can be extracted from $1+\tr K^B_n$ and the subleading terms can be neglected. 

Let $\Delta = N_f-n$.  Recall from \eref{zeroKCn} that $K^C_n(i, j) =0$ for $0 \leq i, j \leq n-1$,  $i+j \geq 2N_f-1$.   Therefore, we find that
\bea
\tr K^C_n &=& \sum_{l=0}^{\Delta} K^C_n (n+l,n+l)  \nn \\
&=& - \sum_{l=0}^{\Delta} C_{2n+2l+2} \nn \\
&=& - \sum_{l=0}^{\Delta} \sum_{j = 0}^{2\Delta-2l-2}  {2N_{f} \choose  2n+2+2l+ j}  {2N_{f}+ j-1 \choose j}  (-1)^{j+2n+2l+2} t^{2j +2n+2l+2} \nn \\
&=&- \sum_{l=0}^{\Delta} \sum_{j = 0}^{2\Delta-2l-2}  {2n+2\Delta \choose  2\Delta-2-2l- j}  {2n+2\Delta-1 + j \choose j}  (-1)^{j} t^{2j +2n+2l+2}~. \nn \\
\eea
Now let us extract the coefficient of $t^{2n+2+2k}$ (with $ 0 \leq k \leq 2\Delta-2$). This can be obtained when $j =  k-l$ (with the constraint $0 \leq j = k-l  \leq 2\Delta-2l-2$).  Therefore, such a coefficient can be written as
\bea
\FC_{2n+2+2k} =- \sum _{l = 0}^{\min(k,2\Delta-2-k)} {2n+2\Delta \choose  2\Delta-2-k-l}  {2n+2\Delta-1 + k-l \choose k-l}  (-1)^{k - l}~.  \qquad
\eea
Using Lemma \ref{estellesum}, we find that the coefficients $\FC_{2n+2+2k}$ can be rewritten as
\bea
\FC_{2n+2+2k} = (-1)^{k+1} {2 n+2 \Delta +k \choose 2\Delta -2-k}{2 n+2 k +1 \choose k}~,
\eea
for $0 \leq k \leq 2\Delta-2$.  Thus, the leading behaviour of the numerator $\det(1 + K^B_n)$ in the limit $n \rightarrow \infty$ is
\bea
\det (\BU+K^C_n ) &\sim& 1+ \sum_{k=0}^{2\Delta-2} \FC_{2n+1+2k} t^{2n+2+2k} + O(t^{4n+6}) \nn \\
&=& 1+ \sum_{k=0}^{2\Delta-2}  (-1)^{k+1} {2 n+2 \Delta +k \choose 2\Delta -2-k}{2 n+2 k +1 \choose k} t^{2n+2+2k} + O(t^{4n+6}) \nn \\
&=& 1+ \sum_{k=0}^{2\Delta-2} (-1)^{k+1} {2 N_f+k \choose 2\Delta -2 -k}{ 2n +2k +1 \choose k} t^{2n+2+2k} + O(t^{4n+6})~. \nn \\
&=& 1 + t^{2n+2} {2 N_f \choose 2 \Delta-2} {}_3F_2 \left( n+1,2(1- \Delta),2 N_f+1;~ n+2, 2(n+1);~ t^2 \right)  \nn \\ && + O(t^{4n+6})~. \label{asympCn}
\eea

\paragraph{Example: $N_f = n+2$.} We have
\bea
\det (\BU+K^C_n ) &\sim&   1-(2+n) (3+2 n) t^{2+2 n}+(3+2 n) (5+2 n) t^{4+2 n} \nn \\
&& -(2+n) (5+2 n) t^{6+2 n} + O(t^{4n+6}) ~.
\eea
This is in agreement with the `first half' of the numerator of \eref{np2Cn}.

\section{The Painlev\'e VI equation} \label{PainleveVI}
In this section we relate the Hilbert series computed in the preceding sections with the Painlev\'e VI equation.  We know the connection because of the results of \cite{DaiZhang}, where it is shown that certain expressions that involve Hankel determinants satisfy the Painlev\'e VI equation.  Our aims of this section are (i) to write down explicitly the corresponding parameters of the Painlev\'e VI equation to $SO(2n+1)$, $SO(2n)$ and $Sp(n)$ SQCD, (ii) to show that the Hilbert series give rise to an infinite family of rational solutions, with palindromic numerators, to the Painlev\'e VI equation.  We summarise the information for (i) in \tref{tab:parameters}.

We first present the Jimbo-Miwa-Okamoto $\sigma$-form of the Painlev\'e VI equation whose connection with the Hankel determinant is most transparent.  Subsequently we move on to the standard form of the Painlev\'e VI equation -- this is the most common form appearing in the literature.  Below we follow closely the presentation of \cite{DaiZhang}.


\begin{table}[htdp]
\hspace{-1cm}
{\small
\begin{tabular}{|c|c|c|c|c|c|c|c|c|c|c|c|}
\hline
Gauge group & \multicolumn{3}{|c|}{Hankel parameters}  & \multicolumn{4}{|c|}{$\sigma$-form of Painlev\'e VI}  & \multicolumn{4}{|c|}{Standard form of Painlev\'e VI}  \\
\cline{2-4} \cline{5-8} \cline{9-12}
~& $\alpha$ & $\beta$ & $\gamma$ & $\nu_1$ & $\nu_2$ & $\nu_3$ & $\nu_4$ & $\mu_1$ & $\mu_2$ & $\mu_3$ & $\mu_4$  \\
\hline
$B_n =SO(2n+1)$ & $-\frac{1}{2}$ & $\frac{1}{2}$ & $-N_f$ & $0$ & $\frac{1}{2}$ & $n$ & $n-N_f$ & $\frac{1}{2}\Delta_B^2$ & $-\frac{1}{8}$ & $\frac{1}{8}$ & $\frac{1}{2}(1-N_f^2)$  \\
\hline
$D_n =SO(2n)$  & $-\frac{1}{2}$ & $-\frac{1}{2}$ & $-N_f$ & $-\frac{1}{2}$ & $0$ & $n-\frac{1}{2}$ & $n - \frac{1}{2} -N_f$ & $\frac{1}{2}\Delta_D^2$ & $-\frac{1}{8}$ & $\frac{1}{8}$ & $\frac{1}{2} (1-N_f^2)$ \\
\hline
$C_n =Sp(n)$  & $\frac{1}{2}$ & $\frac{1}{2}$ & $-2N_f$ & $\frac{1}{2}$ & $0$ & $n+\frac{1}{2}$ & $n + \frac{1}{2} -2N_f$ & $2(\Delta_C-1)^2$& $-\frac{1}{8}$ & $\frac{1}{8}$ & $\frac{1}{2} (1-4N_f^2)$ \\
\hline
\end{tabular}
}
\caption{Summary of the parameters in the perturbed Jacobi weights of the Hankel determinants, the parameters in $\sigma$-form of the Painlev\'e VI equation and the parameters in the standard form of the Painlev\'e VI equation.  Here $\Delta_B = N_f - (2n+1)$, $\Delta_D = N_f -2n$ and $\Delta_C = N_f -n$.}
\label{tab:parameters}
\end{table}%

\subsection{The $\sigma$-form of the Painlev\'e VI equation}
For the sake of completeness, we summarise the relevant results of \cite{DaiZhang}.  Let $D_n(T)$ be the Hankel determinant
\bea 
D_{n} (T) &:=& \frac{1}{n!} \int \limits_{0}^1 \ud \zeta_1 \cdots \int \limits_{0}^1 \ud \zeta_n \prod_{1 \leq a < b \leq n} (\zeta_a - \zeta_b)^2 \prod_{k=1}^n w(\zeta_a;T)  \nn \\
&=& \det \left( \int_0^1 \ud \zeta~w(\zeta;T) \zeta^{i+j} \right)_{i,j =0}^{n-1}~, \label{integralDn}
\eea
with the perturbed Jacobi weight 
\bea
w(\zeta;T) =   \zeta^{\alpha} (1- \zeta)^{\beta}(\zeta - T)^{\gamma}~.
\eea
Let us define
\bea
H_n (T) &:=& T(T-1) \frac{\ud}{\ud T} \log D_n (T)~, \label{H}  \\
\tH_n (T) &:=& H_n (T) +d_1 T + d_2~, \label{Htilde}
\eea
where
\bea
d_1 &=& -n (n+ \alpha+\beta+\gamma) - \frac{1}{4} (\alpha+\beta)^2~, \nn \\
d_2 &=& \frac{1}{4} \left[ 2n(n+\alpha+\beta+\gamma) + \beta(\alpha+\beta) - \gamma (\alpha-\beta) \right]~.
\eea
 
The results in \cite{DaiZhang} show that the function $\tH (T)$ satisfies the Jimbo-Miwa-Okamoto $\sigma$-form of the Painlev\'e VI equation \cite{JM, Okamoto}:
\bea
&& \tH'_n (T(T-1) \tH''_n )^2 + \left[2 \tH'_n (T \tH'_n - \tH_n) - (\tH'_n)^2 - \nu_1 \nu_2 \nu_3 \nu_4 \right]^2 \nn \\
&& = (\tH'_n + \nu_1^2)(\tH'_n + \nu_2^2)(\tH'_n + \nu_3^2)(\tH'_n+ \nu_4^2)~, \label{sigmap6}
\eea
where a prime ($'$) denotes the derivative with respect to $T$ and the parameters $\nu_1, \ldots, \nu_4$ can be written in terms of the Hankel parameters $\alpha, \beta, \gamma$ as
\bea
\nu_1 = \frac{1}{2}(\alpha+\beta), \quad \nu_2 = \frac{1}{2}(\beta-\alpha), \quad \nu_3= n + \frac{1}{2}(\alpha+\beta), \quad \nu_4 = n + \frac{1}{2} (\alpha + \beta + 2\gamma)~. \label{sigmaformpara}
\eea

We should like to mention here that in 1995 Magnus \cite{Magnus} obtained the so-called Magnus/Schlesinger equation which can be reduced in a special case to \eref{sigmap6}.  In \cite{ChenIts}, a Magnus/Schlesinger equation was derived using the Riemann-Hibert method in the context of the Akhiezer polynomials.

\paragraph{Parameters of the equation.}
Recall that the variable $T$ above is related to the variable $t$ in the Hilbert series as
\bea
T = \frac{(1+t)^{2}}{4t}~. \label{defTt}
\eea
Substituting $\alpha, \beta, \gamma$ from \tref{tab:parameters} into \eref{sigmaformpara} , we can write the parameters $\nu_1, \ldots, \nu_4$ in terms of $N_c$ and $N_f$ as tabulated in \tref{tab:parameters}.

%
%

\paragraph{Direct checks.} It can also be directly checked that the Hilbert series computed in the preceding sections give rise to the Hankel determinants and hence the functions $\tH_n(T)$ which satisfy the Painlev\'e VI equation \eref{sigmap6}.  Moreover, we perform a similar check for each of the asymptotic formulae and find that the corresponding function $\tH_{n}(T)$ satisfy the Painlev\'e VI equation \eref{sigmap6} when both sides are expanded as power series of $t$ up to the order of the remainder term in such an asymptotic formula.  

\subsection{Infinite families of rational solutions with palindromic numerators} \label{sec:solutionsP6}
Let us define 
\bea
\CD_n(t) := D_n \left(\frac{(1+t)^2}{4t} \right), \quad \CH_n(t) := H_n \left(\frac{(1+t)^2}{4t} \right), \quad \tCH_n(t) := \tH_n \left(\frac{(1+t)^2}{4t} \right)~.
\eea
These functions are simply the aforementioned $D_n(T)$, $H_n(T)$ and $\tH_n(T)$ with the change of variable $T = \frac{(1+t)^{2}}{4t}$.  Note that $\tCH_n(t)$ is a solution to the Painlev\'e VI equation \eref{sigmap6} when $T$ written in terms of $t$.

In this subsection, we show that the solution $\tCH_n(t)$ is a rational function in $t$ with a palindromic numerator.  Hence, for all possible values of $N_c$ and $N_f$, the Hilbert series of $SO(N_c)$ and $Sp(N_c)$ SQCD with $N_f$ flavours give rise to {\it infinite families of rational solutions, with a palindromic numerators, to the Painlev\'e VI equation}.

Since the Hilbert series is a rational function, it is clear from \eref{HShankelBn}, \eref{HShankelDn} and \eref{HShankelCn} that $\CD_{n}(t)$ is also a rational function. Therefore it is clear that $\CH_n(t)$ and $\tCH_n(t)$ are also rational functions in $t$.

Now we show that the numerator of $\tCH_n(t)$ is palindromic.  We make use of the fact that $T$ is invariant under the transformation $t \mapsto 1/t$.  Therefore the functions $D_n(T)$, $H_n(T)$ and $\tH_n(T)$ are also invariant under such a transformation.  Hence the function $\tCH_n(t)$ has the following property:
\bea
\tCH_n(1/t) = \tCH_n(t)~.
\eea
Hence, the function $\tCH_n(t)$ has a palindromic numerator.

\subsection{The standard form of the Painlev\'e VI equation} 
According to Theorem 1.2 of \cite{DaiZhang},\footnote{Equation (5.1) of \cite{DaiZhang} should read
\bea
r^*_n &=& \frac{1}{2 t R_n} \Big[ \beta  (1+2 n+\alpha +\beta +\gamma )-(1+2 n+\alpha +2 \beta -t (1+\alpha +\beta )+\gamma ) R_n   \nn \\
&& \quad +(1-t) R_n^2+2 r_n (1+2 n+\alpha +\beta +\gamma -(1-t) R_n)-(1-t) t R_n' \Big]~.\nn
\eea
} the $\sigma$-form \eref{sigmap6} is related to the standard form of the Painlev\'e VI equation as follows.  Let
\bea
W_n (T) = \frac{(T-1)R_n(T)}{2n+\alpha+\beta+\gamma+1}+1~,
\eea
where $R_n(T)$ is given by (4.6) of \cite{DaiZhang}:
\bea
R_n(T) = \frac{2(2n+1+\alpha+\beta+\gamma)(\beta+r_n)r_n}{l(r_n,r^*_n,T)+T(1-T)r'_n(T)}~, \label{RnT}
\eea
and 
\bea
r_n &=& \frac{n(n+\alpha+\gamma)-T H_n'+ H_n}{2n + \alpha + \beta+ \gamma}~, \\
r_n^*&=& -\frac{n(n+\beta+\gamma)+ (T-1) H_n' - H_n}{2n + \alpha+ \beta + \gamma}~, \\
l(r_n, r_n^*,T) &:=& 2(1-T)r_n^2+ \big[(2n-\beta+\gamma)T+2\beta+2 T r_n^* \big]r_n \nn \\
&& -(2n+\alpha+\gamma) T r_n^*-n(n+\gamma) T~.
\eea
We use to the prime ($'$) to denote a derivative with respect to $T$.
Then, the function $W_n (T)$ was shown in \cite{Magnus} to satisfy a particular Painlev\'e VI equation:
\bea
W''_n &=& \frac{1}{2}\left( \frac{1}{W_n} + \frac{1}{W_n-1} + \frac{1}{W_n - T}  \right) (W'_n)^2 - \left( \frac{1}{T} + \frac{1}{T-1} + \frac{1}{W_n - T} \right) W'_n \nn \\
&&  + \frac{W_n(W_n-1)(W_n - T)}{T^2( T^2-1)} \left( \mu_1 + \mu_2 \frac{T}{W^2_n} + \mu_3 \frac{T-1}{(W_n-1)^2} + \mu_4 \frac{T(T-1)}{(W_n-T)^2} \right)~, \label{P6std}\qquad
\eea 
with the parameters given by (see also (1.16) of \cite{DaiZhang})
\bea
\mu_1 = \frac{(2n+\alpha+\beta+\gamma+1)^2}{2}, \qquad \mu_2 = - \frac{\alpha^2}{2}, \qquad \mu_3 = \frac{\beta^2}{2}, \qquad \mu_4 = \frac{1-\gamma^2}{2}~.  \label{P6sigmaformpara}
\eea
The parameters $\mu_1, \ldots, \mu_4$, written in terms of $N_c$ and $N_f$, are tabulated in \tref{tab:parameters}.

%
%
%

\subsubsection*{Rational solutions with palindromic numerators}  
As before, using the fact that $T$ is invariant under $t \mapsto 1/t$, we see that $W_n(T)$ is also invariant. Thus, by defining
\bea
\CW_n( t) = W_n\left( \frac{(1+t)^2}{4t} \right)~,
\eea
it follows that
\bea
\CW_n( 1/t) = \CW_n (t)~. \label{CWt}
\eea
From \eref{RnT}, it is clear that $\CW_n (t)$ is a rational function.  Hence \eref{CWt} implies that $\CW_n (t)$ has a palindromic numerator.   Since $W_n(T)$ is a solution of the Painlev\'e VI equation \eref{P6std}, it follows that $\CW_n (t)$ is also a solution to such an equation written in terms to $t$.  The set of all possible values of $n$ and $N_f$ thus leads to infinite families of rational solutions with palindromic numerators.

\section{Integrable systems and elliptic curves} \label{sec:integrability}
In this section, we discuss the Hamiltonian systems associated with the Painlev\'e VI equations previously obtained.  Since those Painlev\'e equations admit solutions arisen from the Hilbert series, such Hamiltonian systems describe the moduli spaces of $SO(N_c)$ and $Sp(N_c)$ SQCD.  

To each Hamiltonian system, we write down the corresponding family of elliptic curves.  As pointed out in \cite{Noumi}, these curves take the form as the Seiberg--Witten curve for $4$d $\CN=2$ $SU(2)$ SQCD with 4 flavours, with the parameters being functions of $N_c$, $N_f$ and fugacity $t$.  

The presence of the Painlev\'e VI equations implies the existence of a Lax pair and hence the integrability of the aforementioned Hamiltonian system.  We end this section by briefly discussing the validity of our results on the quantum moduli space.

\subsection{Integrable Hamiltonian systems}
It is well-known that each of the six Painlev\'e equations is equivalent to a Hamiltonian system (see \eg,~ \cite{bluebook}).  Since we have shown that the Hilbert series of the moduli spaces of $SO(N_c)$ and $Sp(N_c)$ SCQD satisfy Painvlev\'e VI equations, it is possible to write down the explicit Hamiltonian systems that describe such moduli spaces.

The Painlev\'e VI equation \eref{P6std} can be represented by the following Hamiltonian (see \eg,~ \cite{Noumi} and \cite{bluebook})\footnote{We follow the notation of \cite{Noumi} with the following changes of variables: $H \rightarrow H_n$, $f \rightarrow W_n$, $g \rightarrow V_n$ and $s \rightarrow T$. We also set $\delta =1$.}:
\bea
H_n &=& W_n (W_n-1)(W_n - T) V_n^2 + \big[ (a_1+2a_2)(W_n-1)W_n + a_3(T-1) W_n \nn \\
&& +a_4 T(W_n-1) \big] V_n  +a_2 (a_1 +a_2) (W_n-1)~,  \label{Hamiltonian}
\eea
with the Hamiltonian differential equations:
\bea
T( T-1) \frac{\ud W_n}{\ud T} &=& \frac{\partial H_n}{\partial V_n}~, \label{Hamil1} \\
T( T-1) \frac{\ud V_n}{\ud T} &=& - \frac{\partial H_n}{\partial W_n}~. \label{Hamil2}
\eea

\paragraph{Parameters.} In order to determine $a_1, \ldots, a_4$ in terms of the known parameters, we need to obtain the connection between \eref{Hamiltonian} and the Painlev\'e VI equation \eref{P6std}. We proceed as follows.   
\ben
\item Use \eref{Hamil1} to compute $V_n(T)$ in terms of $W_n(T)$ and $W'_n(T)$.  
\item Use \eref{Hamil2} to compute $V'_n(T)$ in terms of $W_n(T)$ and $V_n(T)$.  We substitute $V_n(T)$ from Step 1 into this and obtain another expression of $V'_n(T)$ in terms of $W_n(T)$ and $W'_n(T)$.
\item Take the derivative of $V_n(T)$ obtained in Step 1 and equate this to $V'_n(T)$ obtained in Step 2. We finally arrive at the Painlev\'e VI equation \eref{P6sigmaformpara} with the parameters $\mu_1, \ldots, \mu_4$ given by
\bea
\mu_1 = \frac{1}{2}a_1^2, \qquad \mu_2 = -\frac{1}{2}a_4^2,  \qquad \mu_3 = \frac{1}{2}a_3^2, \qquad  \mu_4 = \frac{1}{2}(1-a_0^2)~,
\eea
where
\bea
a_0 = 1 - a_1 -2a_2 -a_3-a_4~.
\eea
This is in agreement with Theorem 2.1 of \cite{Noumi}.
\een

Using \eref{P6sigmaformpara}, we can write $a_1, \ldots a_4$ in terms of the Hankel parameters $\alpha, \beta, \gamma$ given in \tref{tab:parameters} as follows:
\bea
a_1 = 2n+\alpha+\beta+\gamma+1,  \quad a_2= -(n+\alpha +\beta +\gamma) , \quad a_3 = \beta, \quad a_4 = \alpha~. 
\eea
We explicitly tabulate the parameters $a_1, \ldots, a_4$ in terms of $N_c$ and $N_f$ below.

\begin{table}[htdp]
\begin{center}
\begin{tabular}{|c|c|c|c|c|}
\hline
Gauge group & \multicolumn{4}{|c|}{Parameters of the Hamiltonian} \\
\cline{2-5}
& $a_1$ & $a_2$ & $a_3$ & $a_4$ \\
\hline
 $B_n=SO(2n+1)$  & $-\Delta_B$ & $N_f -n$ & $\frac{1}{2}$ & $-\frac{1}{2}$ \\
 \hline
 $D_n = SO(2n)$ & $-\Delta_D$ & $N_f-n+1$ & $-\frac{1}{2}$ & $-\frac{1}{2}$ \\
 \hline
  $C_n = Sp(n)$ & $2-2 \Delta_C$ & $2N_f-n-1$ & $\frac{1}{2}$ & $\frac{1}{2}$ \\
  \hline
\end{tabular}
\end{center}
\caption{Parameters $a_1, \ldots, a_4$ of the Hamiltonian written in terms of $N_c$ and $N_f$.  Here $\Delta_B = N_f - (2n+1)$, $\Delta_D = N_f -2n$ and $\Delta_C = N_f -n$.}
\label{tab:paraHamil}
\end{table}%

\paragraph{Integrability.} The Lax pair, written in various forms, of the Hamiltonian system \eref{Hamiltonian} associated with the Painlev\'e VI equation is given by, \eg, Eqs. (A.45.8), (A.45.26)--(A.45.30) of \cite{Mehta}, \cite{Conte}, and Eq. (35) of \cite{NoumiLaxPairs}.  Such a Lax pair gives rise to the integrability structure of the Hamiltonian system \eref{Hamiltonian} which describes the moduli spaces of $SO(N_c)$ and $Sp(N_c)$ SQCD.

\subsection{Elliptic curves} \label{sec:curves}
It was pointed out in Appendix A of \cite{Noumi} that each Painlev\'e equation can be associated with Seiberg--Witten curves appearing in 4d $\CN=2$ SQCD with the gauge group $SU(2)$ \cite{Seiberg:1994aj}.  In this section, we write down the corresponding family of elliptic curves to each Painlev\'e VI equation previously obtained.  

According to Table 2 of \cite{Noumi}, the family of elliptic curves corresponding to the Painlev\'e VI equation can be identified with the Seiberg--Witten curves for $\CN=2$ $SU(2)$ gauge theory with 4 flavours.  Subsequently, we follow the notation of \cite{Noumi}, which is equivalent to (16.38) and (17.58) of \cite{Seiberg:1994aj}\footnote{In order to transfer from the notation in (16.38) of \cite{Seiberg:1994aj} to the notation of \cite{Noumi}, one simply shifts $x \rightarrow x+c_1u$ and defines $\rho = -(c_1+c_2),~\sigma=-(c_1-c_2)$.}:
\bea
y^2 &=& x(x- \rho u)(x-\sigma u) - \frac{1}{4} (\rho-\sigma)^2 u_2 x^2 \nn \\
&&  - \left( \frac{1}{4}(\rho-\sigma)^2 \rho \sigma u_4 - \frac{1}{2} \rho \sigma (\rho^2 - \sigma^2) s_4 \right) x \nn \\
&& -(\rho - \sigma) \rho^2 \sigma^2 s_4 u - \frac{1}{4} (\rho - \sigma)^2 \rho^2 \sigma^2 u_6~,
\eea
where 
\bea
\begin{array}{llll}
u_2 &= \sum_{i=1}^4 m_i^2~, \qquad u_4 &= \sum_{1 \leq i <j \leq 4} m_i^2 m_j^2~, \qquad u_6 &= \sum_{1 \leq i < j < k \leq 4} m_i^2 m_j^2 m_k^2~, \nn \\
s_4 &= \prod_{i=1}^4 m_i~, \qquad \rho &= -\theta_3(0,\tau)^4~,  \qquad \qquad \sigma &= - \theta_2(0,\tau)^4~,
\end{array}
\eea
with the theta functions defined as
\bea
\theta_2(0, \tau) = \sum_{n \in \BZ} (-1)^n q^{\frac{1}{2}n^2}~, \qquad \theta_3(0, \tau) = \sum_{n \in \BZ} q^{\frac{1}{2}n^2}~,\qquad  q = \exp(2 \pi i \tau)~.
\eea

The parameters $m_1, \ldots, m_4$ are related to the parameters $a_1, \ldots, a_4$ of \eref{Hamiltonian} and the Hankel parameters $\alpha, \beta, \gamma$ given in \tref{tab:parameters} as follows \cite{Noumi}:
\bea
\begin{array}{lll}
 m_1 &= \frac{1}{2} (a_1+2 a_2+2 a_3+a_4) &=  \frac{1}{2} (1+\beta -\gamma ) \\
m_2 &= \frac{1}{2} (a_1 + 2 a_2 + a_4) &= \frac{1}{2} (1-\beta -\gamma )  \\
m_3 &= \frac{1}{2} (a_1+a_4) &= \frac{1}{2} (1+2 n+2 \alpha +\beta +\gamma ) \\
m_4 &= \frac{1}{2} (a_1 - a_4) &= \frac{1}{2} (1+2 n+\beta +\gamma )~.
\end{array}
\eea
Moreover, the parameter $\tau$ is related to $T = \frac{(1+t)^2}{4t}$ as follows \cite{Noumi}:
\bea
T = \frac{\sigma}{\rho} = \left( \frac{\theta_2(0,\tau)}{\theta_3(0,\tau)} \right)^4~.
\eea

In $\CN=2$ gauge theory, one interprets $m_1, \ldots, m_4$ as the mass parameters and $\tau$ as the gauge coupling parameter.  However it is not clear from our discussion whether such interpretations still hold in $\CN=1$ SQCD we are considering.  At the moment, what we can infer is that the parameters in the curves for $\CN=1$ SQCD are certain functions of $N_c$, $N_f$ and the fugacity $t$.   We leave the issues of the physical origin and physical interpretation of such curves for future works.\footnote{We mention, {\it en passant}, that there are a number of works on $\CN=1$ Seiberg--Witten curves. Many of these are listed in \cite{Tachikawa:2011ea}, namely~ \cite{Intriligator:1994sm,Kapustin:1996nb,Kitao:1996mb,Giveon:1997gr,Csaki:1997zg,Gremm:1997sz,Lykken:1997gy,Giveon:1997sn,deBoer:1997zy,Burgess:1998jh,Csaki:1998dp,Hailu:2002bg,Hailu:2002bh}. It would be interesting to find out a connection between the curves in this paper and the curves in those references.}  

We tabulate the parameters $m_1, \ldots, m_4$ in terms of $N_c$ and $N_f$ in \tref{tab:paracurves}.

\begin{table}[htdp]
\begin{center}
\begin{tabular}{|c|c|c|c|c|}
\hline
Gauge group & \multicolumn{4}{|c|}{Parameters of the elliptic curves} \\
\cline{2-5}
& $m_1$ & $m_2$ & $m_3$ & $m_4$ \\
\hline
 $B_n=SO(2n+1)$  & $\frac{1}{2} \left( \frac{3}{2} +N_f \right)$ & $\frac{1}{2} \left( \frac{1}{2} +N_f \right)$ & $-\frac{1}{2} \left(\Delta_B +\frac{1}{2} \right)$ & $-\frac{1}{2} \left(\Delta_B -\frac{1}{2} \right)$ \\
 \hline
 $D_n = SO(2n)$ & $\frac{1}{2} \left( \frac{1}{2} +N_f \right)$ & $\frac{1}{2} \left( \frac{3}{2} +N_f \right)$ & $-\frac{1}{2} \left(\Delta_D +\frac{1}{2} \right)$ & $-\frac{1}{2} \left(\Delta_D -\frac{1}{2} \right)$ \\
 \hline
  $C_n = Sp(n)$ & $\frac{3}{4}+N_f$ & $\frac{1}{4}+N_f$ & $\frac{5}{4}-\Delta_C$ & $\frac{3}{4}-\Delta_C$ \\
  \hline
\end{tabular}
\end{center}
\caption{Parameters $m_1, \ldots, m_4$ of the elliptic curves written in terms of $N_c$ and $N_f$.  Here $\Delta_B = N_f - (2n+1)$, $\Delta_D = N_f -2n$ and $\Delta_C = N_f -n$.}
\label{tab:paracurves}
\end{table}%

%
%

\subsection{Remarks on the validity of results on quantum moduli spaces} \label{sec:validity}
So far we have focused on the classical moduli spaces of $SO(N_c)$ and $Sp(N_c)$ SQCD with $N_f$ flavours. Indeed, the Hilbert series in the preceding sections and in \cite{Hanany:2008kn} have been computed to characterise such classical moduli spaces.   We thus previously concluded that the classical moduli spaces are described by the integrable Hamiltonian systems given by \eref{Hamiltonian}.

However, as discussed in \cite{Intriligator:1995id, hep-th/9505006}, the moduli spaces in general receive quantum corrections.   It was shown that, for $SO(N_c)$ SQCD with $N_f < N_c - 4$ and $Sp(N_c)$ SQCD with $N_f \leq N_c$, the moduli spaces are totally lifted by dynamical generated superpotential and hence there are no supersymmetric vacua.  Nevertheless, for $SO(N_c)$ SQCD with $N_f \geq N_c -4$ and $Sp(N_c)$ SQCD with $N_f > N_c$, there are certain branches of the quantum moduli spaces on which there are no superpotentials generated and there remain degenerate quantum vacua.  

Nevertheless, in the case that a quantum moduli space still exists, the the Hilbert series computed in this paper still provide valid descriptions of a region far away from singularities.  This is because the generators and the relations are unaffected by quantum corrections and there are no extra massless degrees of freedom appearing \cite{Intriligator:1995id, hep-th/9505006}.  We therefore conjecture that such a region of the quantum moduli space is still described by the aforementioned integrable Hamiltonian system.

 \acknowledgments
N.~M. thanks Ben Hoare for a number of useful discussions. He is also grateful to Amihay Hanany for educating him about Hilbert series and pointing out its significance in supersymmetric gauge theories.  This work is supported by a research grant of the Max Planck Society.

\appendix
\section{Refined Hilbert series} \label{app:refined}
In this Appendix, we compute the refined Hilbert series of $SO(N_c)$ and $Sp(N_c)$ SQCD with $N_f$ flavours.

\subsection{$SO(2n+1)$ SQCD with $N_f$ flavours} \label{AppBnref}
We are interested in computing the following integrals: 
\bea
\CI_{N_f, B_n}(t,x) &=&  \frac{1}{(2 \pi i)^{n} n!} \oint \limits_{|z_1| =1}  \frac{\ud z_1}{z_1} \cdots \oint \limits_{|z_{n}| =1}  \frac{\ud z_{n}}{z_{n}} \left | \Delta_{n} \left(z + \frac{1}{z} \right) \right|^2 \prod_{a=1}^n \left[ 1- \frac{1}{2} \left(z_a + \frac{1}{z_a} \right)   \right] \nn \\
&& \times \PE \left[ [1,0, \ldots,0]_x \sum_{a=1}^n \left( z_a + \frac{1}{z_a} \right) t \right] ~. \label{intNfBnref}
\eea

\paragraph{The symbol and its factorisation.}  We define the symbol $a(z)$ to be
\bea
a(z) := \PE[ [1,0, \ldots,0]_x z t] \PE[ [1,0, \ldots,0]_x z^{-1} t]~. 
\eea
Put $a(z) = a_+(z) \ta_+(z)$ with
\bea
a_+(z) &=&   \PE[ [1,0, \ldots,0]_x z t], \qquad \ta_+(z) =  \PE[ [1,0, \ldots,0]_x z^{-1} t] ~.
\eea
Define the function $c(z)$ as
\bea
c(z) = a_+^{-1} (z) \ta_+ (z) = \frac{ \PE[ [1,0, \ldots,0]_x z^{-1} t]}{ \PE[ [1,0, \ldots,0]_x z t]}~.
\eea

\paragraph{Fourier coefficients.} The Fourier coefficients $c_k$ (with $k \in \BZ$) of a function $c(z)$ are defined by
\bea
c_k := \frac{1}{2 \pi i} \oint_{|z|=1} \frac{\ud z}{z} z^{-k} c(z) 
= \frac{1}{2 \pi i} \oint_{|z|=1} \frac{\ud z}{z} z^{-k}  \frac{ \PE[ [1,0, \ldots,0]_x z^{-1} t]}{ \PE[ [1,0, \ldots,0]_x z t]}~. \label{ckrefBn}
\eea
Note that the coefficients $c_k$ can be computed from the first equation in (2.50) of \cite{Chen:2011wn}:
\bea
c_k &=&  \sum_{m=0}^{N_f-k} [m, 0, \ldots, 0]_x [0, \ldots,0,1_{(k+m);L},0, \ldots,0]_x (-1)^{m+k} t^{2m+k} \nn \\
&=& \sum_{m=0}^{N_f-k} \big( [m, 0,\ldots,0,1_{(k+m);L},0, \ldots,0] + [m-1, 0,\ldots,0,1_{(k+m+1);L},0, \ldots,0] \big) \times \nn \\
&& \qquad \qquad (-1)^{m+k} t^{2m+k}~.  \label{ckref}
\eea
Define an infinite matrix $K^B$ to be the such that the $(i, j)$-entry (with $i, j =0,1,2,\ldots$) is given by
{\small
\bea
 K^B(i,j) &=& - c_{i+j +1} \nn \\
&=& \sum_{m=0}^{N_f-(i+j+1)} \Big( [m, 0,\ldots,0,1_{(i+j+m+1);L},0, \ldots,0] \nn \\
&& \qquad \qquad \quad + [m-1, 0,\ldots,0,1_{(i+j+m+2);L},0, \ldots,0] \Big) (-1)^{m+i+j} t^{2m+i+j+1}~.
\eea}
Take the matrix $K^B_n$ to be
\bea
K^B_n = Q_n K^B Q_n~,
\eea
where $Q_n$ is defined as in \eref{Qndef}.
It follows that
\bea
K^B_n(i, j) =\left\{ 
  \begin{array}{l l}
0 &\quad \text{for}~ 0 \leq i, j \leq n-1~\text{and}~ i+j \geq N_f\\
-c_{i+j+1} &\quad \text{otherwise}~.
\end{array} \right. \label{KBnref}
\eea

\paragraph{EXDT II formula.} The integrals \eref{intNfBnref} can be computed from the EXDT II formula (see Proposition 4.1 of \cite{BE2008}):
\bea
\CI_{N_f, B_n}(t,x) = G(a)^n \widehat{F}_{II} (a) \det (\BU + K^B_n) ~,
\eea
where the function $G(a)$ is defined by
\bea
G(a) := \exp \left(\frac{1}{2 \pi} \oint_{|z|=1} \frac{\ud z}{z} \log a(z) \right) = 1~,
\eea
and the function $\widehat{F}_{II} (a)$ is given by (see Proposition 3.3 of \cite{BE2008}):
\bea
\widehat{F}_{II} (a) &=& \exp \left(- \sum_{n=0}^\infty [\log a]_{2n+1} +\frac{1}{2} \sum_{n=1}^\infty n [\log a]_n^2 \right)~.
\eea
Note that 
\bea
\log a = \sum_{m=1}^\infty \frac{1}{m} \left( [1,0, \ldots,0]_{x^m} z^m t^m +[1,0, \ldots,0]_{x^m} z^{-m} t^m \right) ~.
\eea
Therefore, we obtain
\bea
[\log a]_{2n+1} =  \frac{1}{2n+1} [1,0, \ldots,0]_{x^{2n+1}}  t^{2n+1}~,  \quad
\left[ \log a \right]_n =  \frac{1}{n} [1,0, \ldots,0]_{x^n}  t^n~.
\eea
Thus, we have
\bea
\widehat{F}_{II} (a) &=& \exp \Bigg(- \sum_{n=0}^\infty \frac{1}{2n+1} [1,0, \ldots,0]_{x^{2n+1}}  t^{2n+1}  + \sum_{n=1}^\infty \frac{1}{2n} [1,0, \ldots,0]^2_{x^n}  t^{2n} \Bigg) \nn \\
&=& \exp \Bigg(- \sum_{n=1}^\infty \frac{1}{n} [1,0, \ldots,0]_{x^{n}}  t^{n}  + \sum_{n=1}^\infty \frac{1}{2n} \big( [1,0, \ldots,0]^2_{x^n} +[1,0, \ldots,0]_{x^{2n}} \big) t^{2n} \Bigg) \nn \\
&=& \exp \Bigg(- \sum_{n=1}^\infty \frac{1}{n} [1,0, \ldots,0]_{x^{n}}  t^{n}  + \sum_{n=1}^\infty \frac{1}{n}  [2,0, \ldots,0]_{x^n}  t^{2n} \Bigg) \nn \\
&=& \frac{\PE \big[ [2,0, \ldots,0]_x t^2 \big]}{\PE\big[ [1,0, \ldots,0]_x t \big]}~.
\eea

\paragraph{The Hilbert series.} From \eref{HSBnPE} and \eref{IBnref}, the Hilbert series is then given by
\bea
g_{N_f, B_n} (t,x)  &=&  \PE \left[ [1,0, \ldots,0]_x t \right] \CI_{N_f, B_n} (t,x)  \nn \\
&=&  \PE \left[ [1,0, \ldots,0]_x t \right]  G(a)^n \widehat{F}_{II} (a) \det (\BU + K^B_n) \nn \\
&=&  \det (\BU + K^B_n) \PE \big[ [2,0, \ldots,0]_x t^2 \big]~. \label{refBOBEHS}
\eea 

\subsubsection*{Some examples}
Below we give certain explicit examples.

\paragraph{The case of $N_f < 2n+1$.}  In this case $K^B_n (i,j ) =0$ for all $i, j$.  Therefore, the Hilbert series is
\bea
g_{N_f < 2n+1} (t,x) &=& \PE \big[ [2,0, \ldots,0]_x t^2 \big] \nn \\
&=&  \frac{1}{(1-t^2)^{N_f}} \sum_{m_1, \ldots, m_{N_f-1} \geq 0} [2m_1, 2m_2, \ldots, 2m_{N_f-1}]_x~ t^{2\sum_{j=1}^{N_f-1} j m_j }~. \nn \\ \label{BnNfl2np1}
\eea

\paragraph{The case of $N_f = 2n+1$.}  In this case 
\bea
K^B_n (i,j ) = \left\{ 
  \begin{array}{l l}
    t^{N_f} & \quad \text{if $i = j =n$}\\
    0 & \quad \text{otherwise}~.
  \end{array} \right. 
\eea
The Hilbert series is thus
\bea
g_{N_f = 2n+1, B_n} (t,x) &=& \left(1+t^{N_f} \right)\PE \big[ [2,0, \ldots,0]_x t^2 \big] \nn \\
&=& \sum_{m_1, \ldots, m_{2n+1} \geq 0} [2m_1, 2m_2, \ldots, 2m_{2n}]_x~ t^{2\sum_{j=1}^{2n} j m_j+ (2n+1) m_{2n+1}  }~.
\eea

\paragraph{The case of $N_f = 2n+2$.}  The non-trivial block of the matrix $K^B_n$ is given by
\bea
\begin{pmatrix}
 K^B_n (n,n) \quad& -t^{2+2 n} \\
 -t^{2+2 n} & 0
\end{pmatrix}~.
\eea
where
\bea
K^B_n(n,n) 
&=& \sum_{m=0}^{1} \big( [m, 0,\ldots,0,1_{(2n+m+1);L},0, \ldots,0] \nn \\
&& \quad \quad  + [m-1, 0,\ldots,0,1_{(2n+m+2);L},0, \ldots,0] \big) \times (-1)^{m} t^{2m+2n+1} \nn \\
&=&  [0, \ldots, 0, 1] t^{1+2 n}-[1, 0, \ldots ,0] t^{3+2 n}~. 
\eea
Therefore, the Hilbert series is
\bea
g_{N_f = 2n+2, B_n} (t,x) &=& \left( 1+[0, \ldots, 0, 1]_x t^{1+2 n}-[1, 0, \ldots ,0]_x t^{3+2 n}-t^{4+4 n} \right) \PE \big[ [2,0, \ldots,0]_x t^2 \big] \nn \\
&=& \sum_{m_1, \ldots, m_{2n+1} \geq 0} [2m_1, 2m_2, \ldots, 2m_{2n}, m_{2n+1}]_x~ t^{2\sum_{j=1}^{2n} j m_j+ (2n+1) m_{2n+1}  }~. \nn \\
\eea

\paragraph{The case of $N_f = 2n+3$.}  The non-trivial block of the matrix $K_n$ is given by
\bea
\left(
\begin{array}{ccc}
 K^B_n(n,n) & K^B_n(n, n+1) & t^{3+2 n} \\
 K^B_n(n, n+1) & t^{3+2 n} & 0 \\
 t^{3+2 n} & 0 & 0
\end{array}
\right)~,
\eea
where
\bea
K^B_n(n,n) 
&=& \sum_{m=0}^{2} \big( [m, 0,\ldots,0,1_{(2n+m+1);L},0, \ldots,0] \nn \\
&& \quad \quad  + [m-1, 0,\ldots,0,1_{(2n+m+2);L},0, \ldots,0] \big) \times (-1)^{m} t^{2m+2n+1} \nn \\
&=& [0, \ldots,0,1,0] t^{1+2n} - \left( [1,0, \ldots,0,1] + [0,  \ldots,0] \right) t^{3+2n} + [2,0, \ldots,0] t^{5+2n}~. \nn \\
K^B_n(n,n+1)&=& \sum_{m=0}^{1} \big( [m, 0,\ldots,0,1_{(2n+m+2);L},0, \ldots,0] \nn\\ 
&& \quad \quad + [m-1, 0,\ldots,0,1_{(2n+m+3);L},0, \ldots,0] \big) \times (-1)^{m+1} t^{2m+2n+2} \nn \\
&=& -[0, \ldots,0,1] t^{2+2n} +  [1,0, \ldots,0] t^{4+2n}~.
\eea
Therefore, the Hilbert series is
\bea
g_{N_f = 2n+3, B_n} (t,x) &=& \PE \big[ [2,0, \ldots,0]_x t^2 \big] \times \nn \\
&& \Big( 1+  [0,0,1,0]_x t^{1+2n} - [1,0,0,1]_x t^{3+2n} + [2,0,0,0]_x t^{5+2n} \nn \\
&& - [0,0,0,2]_x t^{4+4n} + [1,0,0,1]_x t^{6+4n} - [0,1,0,0]_x t^{8+4n}  - t^{9+6n} \Big) \nn \\
&=&  \sum_{m_1, \ldots, m_{2n+1} \geq 0} [2m_1, 2m_2, \ldots, 2m_{2n}, m_{2n+1},0]_x~ t^{2\sum_{j=1}^{2n} j m_j+ (2n+1) m_{2n+1}  }~. \nn \\ \label{BnNf2np3}
\eea

\paragraph{General formula.}  Note that the results in the above examples are in agreement with (2.29) and (2.30) of \cite{Hanany:2008kn}:
\bea
g_{N_f , B_n} (t,x) &=& \sum_{m_1, \ldots, m_{2n+1} \geq 0} [2m_1, 2m_2, \ldots, 2m_{2n}, m_{2n+1},0, \ldots,0]_x~ t^{2\sum_{j=1}^{2n} j m_j+ (2n+1) m_{2n+1}  }~. \label{genchaexpBn} \nn \\
\eea

%

\subsection{$SO(2n)$ SQCD with $N_f$ flavours}
We are interested in computing the integral \eref{inthsrefnfdn}:
\bea
g_{N_f, D_n} (t ,x)  &=& \frac{2^{-(n-1)}}{(2 \pi i)^{n} n!} \oint \limits_{|z_1| =1}  \frac{\ud z_1}{z_1} \cdots \oint \limits_{|z_{n}| =1}  \frac{\ud z_{n}}{z_{n}} \left | \Delta_{n} \left(z + \frac{1}{z} \right) \right|^2 \nn \\
&& \times \PE \left[ [1,0, \ldots,0]_x \sum_{a=1}^n \left( z_a + \frac{1}{z_a} \right) t \right] ~.
\eea
\paragraph{The symbol and its factorisation.}  We define the symbol $a(z)$ to be
\bea
a(z) := \PE[ [1,0, \ldots,0]_x z t] \PE[ [1,0, \ldots,0]_x z^{-1} t]~. 
\eea
Put $a(z) = a_+(z) \ta_+(z)$ with
\bea
a_+(z) &=&   \PE[ [1,0, \ldots,0]_x z t], \qquad \ta_+(z) =  \PE[ [1,0, \ldots,0]_x z^{-1} t] ~.
\eea

\paragraph{Fourier coefficients.} The Fourier coefficients $(a_+^{-1})_k$ (with $k \in \BZ$) of a function $a_+^{-1}$ are given by
\bea
(a_+^{-1})_k &:=& \frac{1}{2 \pi i} \oint_{|z|=1} \frac{\ud z}{z} z^{-k} a_+^{-1} \nn\\
&=& \frac{1}{2 \pi i} \oint_{|z|=1} \frac{\ud z}{z} z^{-k} \frac{1}{\PE[ [1,0, \ldots,0]_x z t]} = [0,\ldots,0,1_{k;L},0, \ldots, 0] (-t)^k~.
\eea
The Fourier coefficients $(z \ta_+)_k$ (with $k \in \BZ$) of a function $z \ta_+$ are given by
\bea
(z \ta_+)_k &:=& \frac{1}{2 \pi i} \oint_{|z|=1} \frac{\ud z}{z} z^{-k} (z \ta_+) = \begin{cases}  [1,0,\ldots,0]_x t & \mbox{if } k = 0  \\ 1  & \mbox{if } k =1 \\ 0 & \mbox{otherwise.} \end{cases}
\eea
The Fourier coefficients $(z a_+^{-1} \ta_+)_k$ (with $k \in \BZ$) of a function $z a_+^{-1} \ta_+$ are given by
\bea
(z a_+^{-1} \ta_+)_k &:=& \frac{1}{2 \pi i} \oint_{|z|=1} \frac{\ud z}{z} z^{-k} (z a_+^{-1} \ta_+) = c_{k-1}~,
\eea
where the last equality follows from the definition of $c_k$ given in \eref{ckrefBn}.

\paragraph{The matrices $K^D$ and $K^D_n$.} From Case IV on Page 13 of \cite{BE2008}, we define an infinite matrix $K^D$ to be the such that the $(i, j)$-entry (with $i, j =0,1,2,\ldots$) is given by
\bea
K^D(i,j) &=& (z a_+^{-1} \ta_+)_{i+j +1} - \sum_{l=0}^i (a^{-1}_+)_{i-l} (z  \ta_+)_{l+j+1}~.
\eea
By a similar reasoning to the derivation of \eref{KD}, it follows that
\bea
K^D_n (i,j)=  \left\{ 
\begin{array}{l l}
0 &\quad \text{for $1 \leq i, j \leq n-1$ and $i+j \geq N_f+1$}\\
c_{i+j} &\quad \text{otherwise}~.
\end{array} \right. \label{KDnref}
\eea
Recall that an explicit expression of $c_k$ is given in \eref{ckref}.

\paragraph{EXDT IV formula.} The integrals \eref{inthsrefnfdn} can be computed from the following formula (see Proposition 4.1 of \cite{BE2008}):
\bea
g_{N_f, D_n}(t,x) = G(a)^n \widehat{F}_{IV} (a) \det (\BU + K^D_n) ~, \label{HSDnrefBOBE}
\eea
where the function $G(a)$ is defined by
\bea
G(a) := \exp \left(\frac{1}{2 \pi} \oint_{|z|=1} \frac{\ud z}{z} \log a(z) \right) = 1~,
\eea
and the function $\widehat{F}_{IV} (a)$ is given by (see Proposition 3.3 of \cite{BE2008}):
\bea
\widehat{F}_{IV} (a) &=& \exp \left( \sum_{n=1}^\infty [\log a]_{2n} +\frac{1}{2} \sum_{n=1}^\infty n [\log a]_n^2 \right) \nn \\
&=& \exp \Bigg( \sum_{n=1}^\infty \frac{1}{2n} [1,0, \ldots,0]_{x^{2n}}  t^{2n}  + \sum_{n=1}^\infty \frac{1}{2n} [1,0, \ldots,0]^2_{x^n}  t^{2n} \Bigg) \nn \\
&=& \exp \Bigg( \sum_{n=1}^\infty  \frac{1}{n} [2,0, \ldots,0]_x^n t^{2n} \Bigg) \nn \\
&=& \PE \left[ [2,0,\ldots,0]_x t^2 \right]~.
\eea

\paragraph{The Hilbert series.} From \eref{HSDnrefBOBE}, the Hilbert series is then given by
\bea
g_{N_f, D_n} (t,x) &=&  G(a)^n \widehat{F}_{IV} (a) \det (\BU + K^D_n) \nn \\
&=&  \det (\BU + K^D_n) \PE \big[ [2,0, \ldots,0]_x t^2 \big]~.
\eea 

Similar computations can be performed as for \eref{BnNfl2np1}--\eref{BnNf2np3} in the case of $B_n$.  The results are in agreement with (2.29) and (2.30) of \cite{Hanany:2008kn}:
\bea
g_{N_f , D_n} (t,x) &=& \sum_{m_1, \ldots, m_{2n} \geq 0} [2m_1, 2m_2, \ldots, 2m_{2n-1}, m_{2n},0, \ldots,0]_x~ t^{2\sum_{j=1}^{2n-1} j m_j+ (2n) m_{2n}  }~.  \nn \\
\eea

\subsection{$Sp(n)$ SQCD with $N_f$ flavours}
We are interested in computing the integral \eref{inthsrefnfcn}:
\bea
g_{N_f, C_n} (t ,x)
&=&  \frac{1}{(2\pi )^n n!} \oint \limits_{|z_1| =1}  \frac{\ud z_1}{z_1} \cdots \oint \limits_{|z_{n}| =1}  \frac{\ud z_{n}}{z_{n}} \left | \Delta_{n} \left(z + \frac{1}{z} \right) \right|^2  \prod_{a=1}^n \left[ 1- \frac{1}{2} \left(z^2_a + \frac{1}{z^2_a} \right)   \right] \nn \\
&& \qquad \times  \PE \left[ [1,0, \ldots,0]_x \sum_{a=1}^{n} \left( z_a + \frac{1}{z_a} \right) t \right]~.
\eea
\paragraph{The symbol and its factorisation.}  We define the symbol $a(z)$ to be
\bea
a(z) := \PE[ [1,0, \ldots,0]_x z t] \PE[ [1,0, \ldots,0]_x z^{-1} t]~. 
\eea
Put $a(z) = a_+(z) \ta_+(z)$ with
\bea
a_+(z) &=&   \PE[ [1,0, \ldots,0]_x z t], \qquad \ta_+(z) =  \PE[ [1,0, \ldots,0]_x z^{-1} t] ~.
\eea

\paragraph{Fourier coefficients.} The Fourier coefficients $(z^{-1} a_+^{-1} \ta_+)_k$ (with $k \in \BZ$) of a function $z a_+^{-1} \ta_+$ are defined by
\bea
(z^{-1} a_+^{-1} \ta_+)_k &:=& \frac{1}{2 \pi i} \oint_{|z|=1} \frac{\ud z}{z} z^{-k} (z^{-1} a_+^{-1} \ta_+) = C_{k+1}~,
\eea
where
\bea
C_k  &=&  \sum_{m=0}^{2N_f-k} [m, 0, \ldots, 0]_x [0, \ldots,0,1_{(k+m);L},0, \ldots,0]_x (-1)^{m+k} t^{2m+k} \nn \\
&=& \sum_{m=0}^{2N_f-k} \big( [m, 0,\ldots,0,1_{(k+m);L},0, \ldots,0] + [m-1, 0,\ldots,0,1_{(k+m+1);L},0, \ldots,0] \big) \times \nn \\
&& \qquad \qquad (-1)^{m+k} t^{2m+k}~. 
\eea
Note that this is the same as $c_k$ given by \eref{ckref}, with $N_f$ replaced by $2N_f$. 

\paragraph{The matrices $K^C$ and $K^C_n$.}  From Case III on Page 13 of \cite{BE2008}, we define an infinite matrix $K^C$ to be the such that the $(i, j)$-entry (with $i, j =0,1,2,\ldots$) is given by
\bea
K^C(i,j) = -(z^{-1} a_+^{-1} \ta_+)_{i+j+1} = -C_{i+j+2}~.
\eea
Take the matrix $K^C_n$ to be
\bea
K^C_n = Q_n K^C Q_n~,
\eea
where $Q_n$ is defined as in \eref{Qndef}.
It follows that
\bea \label{zeroKCnref}
K^C_n(i, j) =  \left\{ 
\begin{array}{ll}
0 &\quad \text{for $0 \leq i, j \leq n-1$,  $i+j \geq 2N_f-1$}  \\
-C_{i+j+2} &\quad \text{otherwise}~.
\end{array} \right.
\eea

\paragraph{EXDT III formula.} The integrals \eref{inthsrefnfcn} can be computed from the following formula (see Proposition 4.1 of \cite{BE2008}):
\bea
\CI_{N_f, C_n}(t) = G(a)^n \widehat{F}_{III} (a) \det (\BU + K^C_n) ~,
\eea
where the function $G(a)$ is defined by
\bea
G(a) := \exp \left(\frac{1}{2 \pi} \oint_{|z|=1} \frac{\ud z}{z} \log a(z) \right) = 1~,
\eea
and the function $\widehat{F}_{III} (a)$ is given by (see Proposition 3.3 of \cite{BE2008}):
\bea
\widehat{F}_{III} (a) &=& \exp \left( -\sum_{n=1}^\infty [\log a]_{2n} +\frac{1}{2} \sum_{n=1}^\infty n [\log a]_n^2 \right)~.
\eea

Note that 
\bea
\log a = \sum_{m=1}^\infty \frac{1}{m} \left( [1,0, \ldots,0]_{x^m} z^m t^m +[1,0, \ldots,0]_{x^m} z^{-m} t^m \right) ~.
\eea
Therefore, we obtain
\bea
[\log a]_{2n} =  \frac{1}{2n} [1,0, \ldots,0]_{x^{2n}}  t^{2n}~,  \quad
\left[ \log a \right]_n =  \frac{1}{n} [1,0, \ldots,0]_{x^n}  t^n~.
\eea
Therefore,
\bea
\widehat{F}_{III} (a) &=& \exp \Bigg(- \sum_{n=1}^\infty \frac{1}{2n} [1,0, \ldots,0]_{x^{2n}}  t^{2n}  + \sum_{n=1}^\infty \frac{1}{2n} [1,0, \ldots,0]^2_{x^n}  t^{2n} \Bigg) \nn \\
&=& \exp \Bigg( \sum_{n=1}^\infty \frac{1}{n}  [0,1,0, \ldots,0]_{x^n}  t^{2n} \Bigg) \nn \\
&=& \PE \big[ [0,1,0, \ldots,0]_x t^2 \big]~,
\eea
where $[0,1,0, \ldots,0]$ is the rank two antisymmetric representation of $SU(2N_f)$.

\paragraph{The Hilbert series.} The Hilbert series is then given by
\bea
g_{N_f, C_n} (t,x) &=& G(a)^n \widehat{F}_{III} (a) \det (\BU + K^C_n) \nn \\
&=&  \det (\BU + K^C_n) \PE \big[ [0,1,0, \ldots,0]_x t^2 \big]~. \label{unrefBOBEHSDnref}
\eea 

\subsubsection*{Some examples}
Below we give certain explicit examples.

\paragraph{The case of $N_f \leq n$.}  In this case $K^C_n (i,j ) =0$ for all $i, j$.  Therefore, the Hilbert series is
\bea
g_{N_f \leq n} (t, x) &=&  \PE \big[ [0,1,0, \ldots,0]_x t^2 \big] \nn \\
&=& \sum_{m_2, m_4, \ldots, m_{2n} \geq 0} [0, m_2, 0, m_4, 0, \ldots, m_{2N_f-2},0] t^{\sum_{j=1}^n 2j m_{2j}}~.
\eea

\paragraph{The case of $N_f = n+1$.}  In this case 
\bea
K^C_n (i,j ) = \left\{ 
  \begin{array}{l l}
    -t^{2N_f} & \quad \text{if $i = j =n$}\\
    0 & \quad \text{otherwise}~.
  \end{array} \right. 
\eea
The Hilbert series is thus
\bea
g_{N_f = n+1, C_n} (t) &=& \left( 1-t^{2 N_f} \right) \PE \big[ [0,1,0, \ldots,0]_x t^2 \big] \nn \\
&=& \sum_{m_2, m_4, \ldots, m_{2n} \geq 0} [0, m_2, 0, m_4, 0, \ldots, m_{2n},0] t^{\sum_{j=1}^n 2j m_{2j}}~.
\eea

\paragraph{The case of $N_f = n+2$.}  The non-trivial block of the matrix $K^C_n$ is given by
{\small
\bea
\left(
\begin{array}{ccc}
K^C_n (n,n) \qquad & K^C_n (n,n+1) \qquad & -t^{4+2 n} \\
K^C_n (n,n+1)  & -t^{4+2 n} & 0 \\
 -t^{4+2 n} & 0 & 0
\end{array}
\right)~,
\eea}
where
\bea
K^C_n (n,n) &=& -C_{2n+2} \nn \\
&=& - \Big( [0,\ldots,0,1,0] t^{2n+2} - ( [1,0,\ldots,0,1]+1)  t^{2n+4} \nn \\
&& \qquad + [2,0, \ldots,0] t^{2n+6} \Big)~,  \\
 K^C_n (n,n+1) &=& -C_{2n+3} \nn \\
 &=& - \Big( -[0,\ldots,0,1] t^{2n+3} + [1,0, \ldots,0]t^{2n+5} \Big)~.
\eea
Hence we obtain
\bea
\det (\BU + K^C_n) &=& 1+K^C_n (n,n) -  \left[ K^C_n (n,n+1) \right]^2 - \left[1+ K^C_n (n,n) \right] t^{2n+4} \nn \\
&& - t^{4n+8} +t^{6n+12}~,
\eea
and the Hilbert series is therefore
\bea
g_{N_f = n+2, C_n} (t) &=& \sum_{m_2, m_4, \ldots, m_{2n} \geq 0} [0, m_2, 0, m_4, 0, \ldots, m_{2n},0,0,0] t^{\sum_{j=1}^n 2j m_{2j}}~.\eea

\paragraph{General formula.}  Note that the results in the above examples are in agreement with (3.10) of \cite{Hanany:2008kn}:
\bea
g_{N_f , C_n} (t,x) = \sum_{m_2, m_4, \ldots, m_{2n} \geq 0} [0, m_2, 0, m_4, 0, \ldots, m_{2n},0,0,\ldots,0] t^{\sum_{j=1}^n 2j m_{2j}}~.
\eea


\end{document}